\theoremstyle{thmstyleone}%
\newtheorem{theorem}{Theorem}%
\newtheorem{observation}[theorem]{Observation}
\newtheorem{corollary}[theorem]{Corollary}
\theoremstyle{thmstyletwo}%
\newtheorem{remark}{Remark}%
\newtheorem{prop}{P\!\!}
\theoremstyle{thmstylethree}%
\newtheorem{definition}{Definition}%
  \def\\{}%
\colorlet{mygreen}{green!85!black}
\colorlet{myred}{red!70!white}
\Crefname{lemma}{Lemma}{Lemmas}
\Crefname{observation}{Observation}{Observations}
\Crefname{corollary}{Corollary}{Corollaries}
\Crefname{definition}{Definition}{Definitions}
\Crefname{remark}{Remark}{Remarks}
\Crefname{case}{Case}{Cases}
\Crefname{prop}{P}{Properties}
\setlist[description]{font=\normalfont\bfseries}
\newcommand{\newproblem}[3]{
    \newglossaryentry{pr:#1}{type=problem,
        name={\ensuremath{#2}},
        description={#3},
        sort={#1}
    }
}
\newcommand{\prob}[1]{\glsentryname{pr:#1}}
\newcommand{\probf}[1]{\glsentrydesc{pr:#1} (\glsentryname{pr:#1})}
\newcommand{\probl}[1]{\glsentrydesc{pr:#1}}
\newabbreviation{DAG}{DAG}{directed acyclic graph}
\newabbreviation{DSP}{DSP}{directed series-parallel graph}
\newabbreviation{EAS}{EAS}{edge-anchored subgraph}
\newabbreviation{MEAS}{MEAS}{maximal edge-anchored subgraph}
\newabbreviation{FPTAS}{FPTAS}{Fully Polynomial Time Approximation Scheme}
\newabbreviation{FPT}{FPT}{Fixed Parameter Tractable}
\newabbreviation{MED}{MED}{minimum equivalent digraph}
\newabbreviation{MST}{MST}{minimum spanning tree}
\newabbreviation{LSP}{LSP}{laminar series-parallel graph}
\let\union\cup
\let\intersect\cap
\let\Union\bigcup
\renewcommand{\epsilon}{\varepsilon}
\newcommand{\maxflow}[1]{\ensuremath{c_{#1}}}
\newcommand{\ecap}{\ensuremath{\textit{cap}}\xspace}
\newcommand{\gcap}[1]{\ensuremath{\mathit{gc}(#1)}}
\newcommand{\distsym}{\ensuremath{\mu}}
\newcommand{\dist}[1]{\ensuremath{\mu_{#1}}}
\newcommand{\freq}{\ensuremath{f}}
\newcommand{\stretcha}{\alpha}
\newcommand{\ssep}{\mid}
\newcommand{\lin}{+}
\newcommand{\lout}{-}
\newcommand{\pathind}[2]{\ensuremath{#1\langle#2\rangle}}
\newcommand{\medval}{\ensuremath{\operatorname{med}(G)}}
\newcommand{\univ}{\ensuremath{U}}
\newcommand{\setcover}{\ensuremath{\mathcal{C}}}
\newcommand{\sets}{\ensuremath{\mathcal{S}}}
\newcommand{\reds}{\ensuremath{\mathcal{R}}}
\newcommand{\greens}{\ensuremath{\mathcal{G}}}
\newcommand\subdiv[1]{\ensuremath{\bar{#1}}}
\newcommand{\bigO}{\ensuremath{\mathcal{O}}}
\newcommand{\R}{\ensuremath{\mathbb{R}}}
\newcommand{\N}{\ensuremath{\mathbb{N}}}
\newcommand{\be}{\ensuremath{\coloneqq}}
\DeclarePairedDelimiter\ceil{\lceil}{\rceil}
\newcommand*\Let[2]{\State #1 $\gets$ #2}
\algnewcommand\Break{\State \textbf{break}}
\algnewcommand\Continue{\State \textbf{continue}}
\algrenewcommand\Return{\State \algorithmicreturn{} }
\algrenewcommand\algorithmicforall{\textbf{for each}}
\algnewcommand\algorithmicinput{\textbf{Input:}}
\algnewcommand\algorithmicoutput{\textbf{Output:}}
\algnewcommand\Input{\item[\algorithmicinput]}%
\algnewcommand\Output{\item[\algorithmicoutput]}%
\algrenewcommand\algorithmicthen{}
\algrenewcommand\algorithmicdo{}
\newcommand{\problemhelper}[2]{%
    \medskip\par\noindent%
    \renewcommand{\arraystretch}{1.2}%
    \renewcommand{\tabcolsep}{2pt}%
    \begin{tabularx}{\linewidth}{|>{\bfseries}lX|}%
    \hline%
    \multicolumn{2}{|X|}{%
        \rule{0pt}{3ex}\probf{#1}%
    }\\[1.7pt]%
    #2
    \hline%
    \end{tabularx}%
    \medskip\par%
}
\newcommand{\problem}[4]{%
    \problemhelper{#1}{%
    Input: & \begin{minipage}[t]{\linewidth}#2\end{minipage}\\%
    Feasible solution: & \begin{minipage}[t]{\linewidth}#3\end{minipage}\\%
    Optimization: & \begin{minipage}[t]{\linewidth}Find minimum #4 over all feasible solutions.\end{minipage}\\%
    Decision variant: & \begin{minipage}[t]{\linewidth}Also given $\kappa\in\mathbb{N}$, is there a feasible solution with \mbox{#4\ $\leq\kappa$}?\end{minipage}\\[1.2pt]%
    }%
}
\newcommand{\problemshort}[2]{%
    \problemhelper{#1}{%
    \multicolumn{2}{|X|}{#2}\\[1.2pt]%
    }%
}
\newcommand{\problemdouble}[5]{%
    \problemhelper{#1}{%
    Input: & \begin{minipage}[t]{\linewidth}#2\end{minipage}\\%
    Feasible solution: & \begin{minipage}[t]{\linewidth}#3\end{minipage}\\%
    Optimization: & \begin{minipage}[t]{\linewidth}Find minimum #4 over all feasible solutions.\end{minipage}\\%
    Decision variant: & \begin{minipage}[t]{\linewidth}Also given $\kappa\in\mathbb{N}$, is there a feasible solution with \mbox{#4\ $\leq\kappa$}?\end{minipage}\\[11.7pt]%
    \hline%
    \multicolumn{2}{|l|}{\rule{0pt}{3ex}#5}\\[1.2pt]%
    }%
}
\tikzset{main_edge/.style={line width=2.3pt}}
\tikzset{overlap/.style={draw=magenta}}
\tikzset{dot/.style={circle,fill=black,inner sep=1pt,minimum size=1pt}}
\tikzset{path/.style={
    line join=round,
    decorate, decoration={
        zigzag,
        segment length=6,
        amplitude=.7,
        post=lineto,
        pre length=3pt,
        post length=3pt
    }, very thick,
    shorten <= 3pt,
    shorten >= 3pt
}}
\tikzset{dsp/.style={
    draw=black,
    thick,
    double=black,
    ->,
    shorten >= 2pt,
    shorten <= 2pt,
    line cap=round
}}
\tikzset{cycle-shade/.style={
    draw=lightgray!50!white,
    line width=15pt,
    rounded corners=5pt,
    line cap=round
}}
\tikzset{edge/.style={
    rounded corners, ->, draw=black, very thick, shorten <= 2pt,shorten >= 2pt}}
\tikzset{label/.style={
    fill opacity=0, draw opacity=0, text opacity=1, inner sep=0, outer sep=3.5pt}}
\newcommand\rnew[1]{#1}
\newcommand\rdel[1]{\ignorespaces}
\begin{document}

\title[Directed Capacity-Preserving Subgraphs]{Directed Capacity-Preserving
Subgraphs: Hardness~and Exact~Polynomial~Algorithms%
\footnote{This work was supported by the German Research Foundation (DFG) grant
461207633 (CH~897/7-1).}}

\author{\fnm{Markus} \sur{Chimani}\orcidlink{0000-0002-4681-5550}}
\email{markus.chimani@uos.de}

\author{\fnm{Max} \sur{Ilsen}\orcidlink{0000-0002-4532-3829}}
\email{max.ilsen@uos.de}

\affil{\orgdiv{Theoretical Computer Science}, \orgname{Osnabrück University},
\orgaddress{\city{Osnabrück},~\country{Germany}}}

\abstract{We introduce and discuss the
\probf{MCPS} problem: given
a directed graph with edge capacities $\ecap$ and a retention ratio
$\stretcha \in (0,1)$, find the
smallest subgraph that, for each pair of vertices~$(u,v)$, preserves at
least a fraction $\stretcha$ of a maximum $u$-$v$-flow's value.
This problem originates from the practical setting of reducing
the power consumption in a computer network:
it models turning off as many links as possible, while retaining
the ability to transmit at least $\stretcha$ times the traffic compared to
the original network.

First we prove that \prob{MCPS} is NP-hard already on a restricted set of \acp{DAG} with
unit edge capacities.
Our reduction also shows that a closely related problem (which only
considers the arguably most complicated core of the problem in the objective
function) is NP-hard to approximate within a sublogarithmic factor already
on \acp{DAG}.
In terms of positive results, we present two algorithms that solve \prob{MCPS}
optimally on \acp{DSP}: a simple linear-time algorithm for the special case of
unit edge capacities and a cubic-time dynamic programming algorithm for the
general case of non-uniform edge capacities.
Further, we introduce the family of \acp{LSP}, a generalization of \acp{DSP}
that also includes cyclic and very dense graphs.
Their properties allow us to solve \prob{MCPS} on \acp{LSP} by employing our
\ac{DSP}-algorithms as subroutines.
In addition, we give a separate quadratic-time algorithm for \prob{MCPS} on
\acp{LSP} with unit edge capacities that also yields straightforward quadratic
time algorithms for several related problems such as \probl{MED} and \probl{DHC}
on \acp{LSP}.
}

\keywords{Maximum flow, Directed graphs, Minimum equivalent digraph, Series-parallel graphs, Inapproximability}

\maketitle

\section{Introduction}

We present the \probf{MCPS} problem. Interestingly, despite it being very
natural, simple to formulate, and practically relevant,
there seems to have been virtually no explicit research regarding it.
We may motivate the problem by recent developments in Internet usage and routing research:
Not only does Internet traffic grow rapidly~\cite{DBLP:journals/pieee/EssiambreT12,website/Cisco20,DBLP:journals/networks/Wong21},
current Internet usage shows distinct traffic peaks in the evening (when people are, e.g., streaming videos) and lows at night and in the early
morning~\cite{DBLP:journals/ton/SchullerACHS18,DBLP:conf/imc/FeldmannGLPPDWW20}.
This has sparked research into the reduction of power consumption for backbone Internet providers~(Tier 1) by turning off
unused resources~\cite{DBLP:conf/icnp/ZhangYLZ10,DBLP:conf/icc/ChiaraviglioMN09}:
One natural way is to turn off as many connections between servers as possible, while still retaining the ability
to route the occurring traffic. Typically, one assumes that (a) the original routing network is suitably dimensioned
and structured for the traffic demands at peak times, and (b) the traffic demands in the low times are mostly similar to the peak demands but
\enquote{scaled down} by some ratio.

Graph-theoretically, we are given a \rnew{(}directed\rnew{)} graph
$G=(V,E)$, a capacity function~$\ecap \colon E \to \R^+$ on its edges, and
a retention ratio~$\stretcha \in (0,1)$.
\rnew{Throughout this paper,} all graphs are simple (i.e., contain no self-loops nor parallel edges)\rnew{, and---unless specified otherwise---directed; we may use the term \emph{digraph} to stress the directedness of a graph}.
For every pair of vertices~$(s,t) \in V^2$, let
$\maxflow{G}(s,t)$ denote the value of a maximum flow (or equivalently, a
minimum cut) from~$s$ to~$t$ in~$G$ according to the capacity function~$\ecap$.
Thus, in the following we unambiguously refer to $\maxflow{G}(s,t)$ as the
\emph{capacity} of the vertex pair~$(s,t)$ in~$G$,
which intuitively represents how much flow can be sent from~$s$ to~$t$ in~$G$.
The lowest capacity among all vertex pairs corresponds to the size~$\gcap{G}$
of the global minimum cut.
\rdel{One may ask for an edge-wise minimum subgraph $G'=(V,E')$, $E'\subseteq E$, such
that $\gcap{G'} \geq \stretcha\cdot\gcap{G}$. We call this problem \probf{MGCPS}:}%
\rnew{Finding an edge-wise minimum subgraph that preserves a fraction~$\stretcha$ of the global minimum cut is NP-hard:}
\rnew{\problem{MGCPS}
{(Di)graph
$G=(V,E)$, edge capacities~$\ecap \colon E \to \R^+$, $\stretcha \in (0,1)$.}
{Subgraph $G'=(V,E')$, $E'\subseteq E$, such that $\gcap{G'} \geq \stretcha\cdot\gcap{G}$.}
{$|E'|$}
Throughout this paper, the problem's name and its abbreviation, both in sans-serif typeface, always refer to the optimization question. Using normal typeface, we may say that a subgraph is a MGCPS to clarify that it is an optimal solution.}

\begin{observation}
    \prob{MGCPS} is NP-hard, both on directed and undirected
    graphs, already with unit edge capacities.
\end{observation}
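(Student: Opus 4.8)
The plan is to reduce from the (NP-hard) \probl{DHC} problem in the directed case, and from \probl{UHC} in the undirected case; both reductions share the same shape, so I describe them in parallel, writing $t=1$ in the directed and $t=2$ in the undirected setting. Let $G=(V,E)$ with $n\be|V|$ be the given instance; we may assume $n\ge 3$. I set all edge capacities to~$1$, so that $\gcap{G'}$ is a non-negative integer for every subgraph $G'=(V,E')$. First I would dispose of a degenerate case: any graph containing a directed (resp.\ undirected) Hamiltonian cycle is strongly connected (resp.\ $2$-edge-connected), hence has $\gcap{G}\ge t$; as both properties are polynomial-time checkable, whenever $\gcap{G}<t$ the reduction just outputs a fixed trivial no-instance of \prob{MGCPS}. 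So from now on assume $k\be\gcap{G}\ge t$.

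Next I would choose the retention ratio so that ``$\gcap{G'}\ge\stretcha\cdot k$'' becomes equivalent to ``$\gcap{G'}\ge t$'', i.e., so that $\lceil\stretcha k\rceil=t$; this holds exactly when $\stretcha k\in(t-1,\,t]$. If $k>t$ I set $\stretcha\be t/k$; if $k=t$ I set $\stretcha\be(2t-1)/(2t)$, so that $\stretcha k=t-\nicefrac{1}{2}$. In both cases $\stretcha\in(0,1)$ as required and $\lceil\stretcha k\rceil=t$. The reduction then outputs the \prob{MGCPS} instance $(G,\stretcha)$ together with the edge budget $\ell\be n$, asking whether $G$ has a spanning subgraph with at most $n$ edges whose global minimum cut has capacity at least~$t$.

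For correctness, observe that a Hamiltonian cycle of $G$ is a spanning subgraph with exactly $n$ edges and global minimum cut equal to~$t$, hence a valid solution. Conversely, suppose the subgraph $G'=(V,E')$ satisfies $|E'|\le n$ and $\gcap{G'}\ge t$. In the directed case $G'$ is strongly connected, so every vertex has in- and out-degree at least~$1$; this forces $|E'|\ge n$, hence $|E'|=n$ and all in- and out-degrees equal~$1$, so $G'$ is a vertex-disjoint union of directed cycles which strong connectivity collapses into a single Hamiltonian cycle. The undirected case is the analogous degree count: $2$-edge-connectivity gives minimum degree $\ge 2$, so $|E'|\ge n$ and thus every degree equals~$2$, making $G'$ a disjoint union of cycles that connectivity collapses into a Hamiltonian cycle. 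I expect the only point needing care to be keeping $\stretcha$ strictly inside the open interval $(0,1)$ in the boundary case $k=t$ (that is, $k=1$ for directed and $k=2$ for undirected inputs); the rest is routine degree counting. Equivalently, the construction shows that \prob{MGCPS} with a suitable retention ratio specialises to \probl{MSCS} on directed graphs and to the minimum $2$-edge-connected spanning subgraph problem on undirected graphs.
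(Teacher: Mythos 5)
Your proposal is correct and follows essentially the same route as the paper: reduce from (directed/undirected) Hamiltonian cycle, pick $\stretcha$ so that the global-cut requirement $\ceil{\stretcha\cdot\gcap{G}}$ becomes exactly $1$ (resp.\ $2$), and observe via degree counting that a minimum strongly connected (resp.\ $2$-edge-connected) spanning subgraph is a Hamiltonian cycle whenever one exists. Your version is in fact slightly more careful than the paper's, which sets $\stretcha=\nicefrac{1}{\gcap{G}}$ (resp.\ $\nicefrac{2}{\gcap{G}}$) and thus lands on $\stretcha=1\notin(0,1)$ in the boundary case $\gcap{G}=1$ (resp.\ $2$) that you handle explicitly.
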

\begin{proof}
    \rdel{Identifying a Hamiltonian cycle in a directed strongly connected (or undirected
    2-edge-connected) graph~$G$ is~NP-hard~\cite{DBLP:conf/coco/Karp72}.
    Consider an optimal~\prob{MGCPS} solution for~$G$ with unit edge capacities and
    $\stretcha = \nicefrac{1}{\gcap{G}}$ ($\nicefrac{2}{\gcap{G}}$):
    every vertex pair is precisely required to have a capacity
    of at least $\ceil{\stretcha \cdot \gcap{G}} = 1$
    ($\ceil{\stretcha \cdot \gcap{G}} = 2$).
    Hence, an $\stretcha$-\prob{MGCPS} of $G$ must also be strongly connected
    (2-edge-connected, respectively) and
    is a Hamiltonian cycle if and only if one exists in $G$.}
    \rnew{%
    We start with the directed case.
    Given a directed graph $G$, the \probl{DHC} problem asks whether $G$ contains a directed cycle traversing every vertex of $G$ exactly once.
     A directed graph is \emph{strongly connected} if there exists a directed path from every vertex to every other vertex.
    \probl{DHC} is NP-hard even in strongly connected graphs~\cite{DBLP:conf/coco/Karp72}.
    From such a digraph $G=(V,E)$, we can construct an~\prob{MGCPS} instance~$(G, \ecap, \stretcha)$ with unit
    edge capacities and $\stretcha = \nicefrac{1}{\gcap{G}}$. %
    In an optimal solution for this instance, every vertex pair is precisely
    required to have a capacity of at least $\ceil{\stretcha \cdot \gcap{G}} = 1$.
    Hence, an optimal solution must be spanning and strongly connected.
    However, the smallest spanning strongly connected subgraph of~$G$ is a Hamiltonian cycle if and only if one exists in~$G$. Optimally solving \prob{MGCPS} would thus solve \probl{DHC}.

    The undirected case is similar:
    Identifying an undirected Hamiltonian cycle in a given undirected 2-edge-connected
    graph~$G$ is~NP-hard as well~\cite{DBLP:conf/coco/Karp72}.
    Thus, we can construct an~\prob{MGCPS} instance~$(G, \ecap, \stretcha)$
    with unit edge capacities and $\stretcha = \nicefrac{2}{\gcap{G}}$:
    in an optimal solution, every vertex pair is precisely required to have a
    capacity of at least $\ceil{\stretcha \cdot \gcap{G}} = 2$.
    Hence, an optimal solution must be spanning and 2-edge-connected,
    and the smallest such subgraph is a Hamiltonian cycle if and only if one exists in $G$.
    }
\end{proof}

However, in our practical scenario, \prob{MGCPS} is not so interesting.
\rdel{Thus, w}\rnew{W}e rather consider the problem where the capacities
$\maxflow{G}(u,v)$ have to be retained for each vertex pair~$(u,v)$
individually:
\rdel{In the \probf{MCPS} problem,
we ask for a set of edges~$E' \subseteq
E$ with minimum size~$|E'|$ yielding the subgraph $G' = (V,E')$, such that
$\maxflow{G'}(s,t) \geq \stretcha \cdot \maxflow{G}(s,t)$ for all $(s,t) \in V^2$.}
\rnew{\problem{MCPS}
{Digraph
$G=(V,E)$, edge capacities~$\ecap \colon E \to \R^+$, $\stretcha \in (0,1)$.}
{Subgraph $G'=(V,E')$, $E'\subseteq E$, such that\\
\begin{minipage}{\linewidth}\centering $\maxflow{G'}(s,t)\geq\stretcha \cdot \maxflow{G}(s,t)$ for each $(s,t) \in V^2$.\end{minipage}
}{$|E'|$}}%
For an \prob{MCPS} instance~$(G, \ecap, \stretcha)$, we will call a
vertex pair $(s,t)$ (or edge $st$) \emph{covered} by an
edge set $E'$ if the graph~$G' = (V,E')$ satisfies $\maxflow{G'}(s,t) \geq
\stretcha \cdot \maxflow{G}(s,t)$.

In the following, we often discuss the special
setting where the capacity function~$\ecap$ assigns 1 to every edge---in this
setting, $\maxflow{G}(s,t)$ equals the maximum number of edge-disjoint
$s$-$t$-paths in~$G$.
We call this problem setting \rdel{{\sc MCPS$_1$}}\rnew{\prob{MCPS1}} to distinguish
it from the more general setting of non-uniform edge capacities\rdel{.}\rnew{:
\problemshort{MCPS1}
{The special case of \prob{MCPS} where all edge capacities are 1.}
}

\subsection{Related Work}
Capacity-preserving subgraphs
are related to the research field of sparsification.
There, given a graph~$G$, one is typically interested in an upper bound on the
size of a graph~$H$ that preserves some of~$G$'s properties up to an error
margin~$\epsilon$.
Graph~$H$ may be a minor of~$G$, a subgraph of~$G$, or a completely new graph on a
subset of~$G$'s vertices (in which case it is called a vertex
sparsifier~\cite{DBLP:conf/focs/Moitra09}).
Such research does not necessarily yield approximation algorithms
w.r.t.\ minimum sparsifier size
as the
obtained upper bound may not be easily correlated to the instance-specific
smallest possible~$H$ (e.g., the general upper bound may be $|E(H)| =
\bigO(\frac{|V|\log|V|}{\epsilon^2})$ whereas there are instances where an
optimal~$H$ is a path); however, sparsifiers often can be used as a black box in
other approximation algorithms.
A majority of cut/flow sparsification research only concerns undirected
graphs:
Bencz{\'{u}}r and Karger show how to create a subgraph that
approximately preserves the value of every cut by sampling edges of the
original graph~\cite{DBLP:conf/stoc/BenczurK96,DBLP:journals/siamcomp/BenczurK15}.
Spielman and Teng present a more general result concerning spectral graph
properties~\cite{DBLP:journals/siamcomp/SpielmanT11}.
Furthermore, there exist techniques for finding vertex sparsifiers that
preserve the congestion of any multi-commodity
flow~\cite{DBLP:conf/soda/AndoniGK14,DBLP:journals/siamcomp/EnglertGKRTT14}.
The main results for \emph{directed} graphs concern sparsifiers that depend on cut
balance~\cite{DBLP:conf/icalp/CenCP021}, and spectral sparsifiers of
strongly connected
graphs~\cite{DBLP:conf/focs/CohenKKPPRS18,DBLP:conf/stoc/CohenKPPRSV17} or
general directed graphs~\cite{DBLP:conf/icess/ZhangZF19}; however, they do not necessarily preserve cut values.

\begin{figure}[t]
    \begin{subfigure}[t]{.48\textwidth}
    \begin{center}
        \begin{tikzpicture}[scale=1.0]
                \begin{scope}[every node/.style={}]
                    \node[dot] (a) at (110:1) {};
                    \node[dot] (b) at (150:1) {};
                    \node[dot] (c) at (190:1) {};
                    \node[dot] (d) at (230:1) {};
                    \node (e) at (270:1) {\dots};
                    \node[dot] (f) at (310:1) {};
                    \node[dot] (g) at (350:1) {};
                    \node[dot] (h) at (30:1) {};
                    \node[dot] (i) at (70:1) {};
                \end{scope}

                \begin{scope}[every edge/.style={draw=myred,very thick,bend right=10}]
                    \path (a) edge (b);
                    \path (b) edge (c);
                    \path (c) edge (d);
                    \path (d) edge (e);
                    \path (e) edge (f);
                    \path (f) edge (g);
                    \path (g) edge (h);
                    \path (h) edge (i);
                    \path (i) edge[draw=black] (a);
                \end{scope}
        \end{tikzpicture}
    \end{center}
    \caption{$G$ is an undirected cycle. The red path (containing all edges but one) has stretch $|V|-1$ but
    retention ratio $\frac{1}{2}$.}
    \end{subfigure}
    \hfill
    \begin{subfigure}[t]{.48\textwidth}
    \begin{center}
        \begin{tikzpicture}[scale=1, transform shape]
                \begin{scope}[every node/.style={}]
                    \node[dot] (x) at (0,0) {};
                    \node[dot] (a) at (126:1) {};
                    \node[dot] (b) at (198:1) {};
                    \node (c) at (270:1) {\dots};
                    \node[dot] (d) at (342:1) {};
                    \node[dot] (e) at (54:1) {};
                \end{scope}

                \begin{scope}[every edge/.style={draw=black,very thick}]
                    \path (a) edge[bend right=25] (b);
                    \path (b) edge[bend right=20] (c);
                    \path (c) edge[bend right=20] (d);
                    \path (d) edge[bend right=25] (e);
                    \path (e) edge[bend right=25] (a);

                    \path (a) edge (c);
                    \path (a) edge (d);
                    \path (b) edge (d);
                    \path (b) edge (e);
                    \path (c) edge (e);
                \end{scope}

                \begin{scope}[every edge/.style={draw=myred,very thick}]
                    \path (x) edge (a);
                    \path (x) edge (b);
                    \path (x) edge (c);
                    \path (x) edge (d);
                    \path (x) edge (e);
                \end{scope}
        \end{tikzpicture}
    \end{center}
    \caption{$G$ is an undirected complete graph. The red star has stretch~$2$
        but retention ratio~$\frac{1}{|V|-1}$.}
    \end{subfigure}
    \caption{Two examples of subgraphs (in red) whose stretch
    differs greatly from their retention ratio. All edge lengths and edge
    capacities are 1, making clear that using the reciprocals of the edge
    lengths as edge capacities does not lead to a direct relation between
    stretch and capacity either.}
\label{fig:spanner_cps_comparison}
\end{figure}
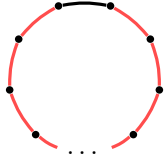
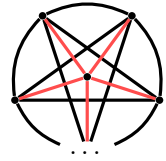

Closely related to sparsifiers are
spanners (see, e.g.,~\cite{DBLP:journals/csr/AhmedBSHJKS20} for a survey on this rich field).
These are subgraphs that preserve the length of a
shortest path within a given ratio (stretch factor) between each pair of
vertices.
However, even the most basic results in this line of work cannot be applied to
\prob{MCPS} due to fundamental differences between shortest paths and minimum
cuts (\Cref{fig:spanner_cps_comparison} illustrates this point).
Results by Räcke~\cite{DBLP:conf/stoc/Racke08}, later
generalized in~\cite{DBLP:journals/corr/abs-0907-3631} and
used for flow sparsification in~\cite{DBLP:journals/siamcomp/EnglertGKRTT14},
show (on undirected graphs) a direct correspondence between the existence of
probabilistic mappings with stretch at most~$\varphi \geq 1$ and those with
\emph{congestion} at most~$\varphi$.
However, the notion of congestion differs greatly from capacity, where the
former is defined as the maximum ratio between the flow routed over an edge in a
multi-commodity flow setting and its~capacity.

Now consider \prob{MCPS1}, where all edge capacities are~1. There, the capacity of a vertex pair is equal to the number of edge-disjoint paths between them: \prob{MCPS1}
is a special case of the \probf{DSND} problem, where one asks for the smallest
subgraph of a directed graph that satisfies given edge-connectivity requirements
for each vertex pair.
Dahl~\cite{DBLP:journals/telsys/Dahl93,DBLP:journals/dam/Dahl93} studied
\prob{DSND} from a polyhedral point of view and presented an ILP approach that
can easily be adapted to even solve the general \prob{MCPS}\rnew{ problem}.
But algorithmically, \prob{DSND} has not received as much attention as its
undirected counterpart~\cite{DBLP:journals/networks/KerivinM05}. For the latter,
Jain developed a %
2-approximation
algorithm~\cite{DBLP:journals/combinatorica/Jain01}.
Similar 2- and 4-approximation algorithms for \prob{DSND} do
exist~\cite{DBLP:journals/networks/MelkonianT04}, but these only work in a
restricted setting where the edge-connectivity requirement function is \enquote{crossing
supermodular}---which the function $\stretcha\cdot\maxflow{G}(s,t)$, required to model \prob{MCPS1} and \prob{MCPS}, is not.

Lastly, \prob{MCPS1} can be seen as a generalization of the well-established
\probf{MED} problem~\cite{DBLP:journals/siamcomp/KhullerRF95,DBLP:conf/soda/Vetta01,DBLP:journals/siamcomp/AhoGU72}:
\rdel{Given a directed graph $G=(V,E)$, one asks for a cardinality-wise minimum edge set $E' \subseteq E$
 such that the subgraph $G' = (V,E')$ preserves the
reachability relation for every pair of vertices.}
\rnew{\problem{MED}
{Digraph
$G=(V,E)$.}
{Subgraph\,$G'=(V,E')$, $E'\subseteq E$, such that, for all $(s,t) \in V^2$,
$t$ is reachable from $s$ in $G'$ if and only if $t$ is reachable from $s$ in $G$.
}{$|E'|$}%
In short, a \ac{MED} is a minimum subdigraph that preserves the reachability relation for every pair of vertices.
}%
We may think of \rdel{the}\rnew{an} \ac{MED} as a directed version of \rdel{the \acl{MST}}\rnew{a minimum spanning tree} (despite not being tree-like)---the
latter contains an undirected path from each vertex to every other reachable
vertex, the former contains a directed one.
\prob{MED} has been shown to be NP-hard via a reduction from
\probl{DHC}~\cite{DBLP:journals/siamcomp/Sahni74,DBLP:books/fm/GareyJ79,DBLP:conf/coco/Karp72}.
The NP-hardness of \prob{MCPS1} (and \prob{MCPS}) follows from a~simple~observation:

\begin{observation}
    \prob{MED} is the special case of \prob{MCPS1} with $\stretcha = \min_{(s,t)
    \in V^2} \nicefrac{1}{\maxflow{G}(s,t)}$.
\end{observation}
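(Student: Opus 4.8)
The plan is to show that, for this particular $\stretcha$, an edge set $E' \subseteq E$ is a feasible \prob{MCPS} solution for $(G,\stretcha)$ if and only if $G' = (V,E')$ preserves the reachability relation of $G$; since both problems minimize $|E'|$ over the very same ground set of edge subsets of $E$, their feasible regions---and hence their optima---then coincide, which is exactly the assertion that \prob{MED} is \prob{MCPS} for this choice of $\stretcha$. Throughout I work in the unit-capacity setting, so $\maxflow{G}(s,t)$ is the integer number of edge-disjoint $s$-$t$-paths; in particular $\maxflow{G}(s,t) \geq 1$ iff $t$ is reachable from~$s$ in~$G$. Write $\maxcap \be \max_{(s,t) \in V^2} \maxflow{G}(s,t)$ for the largest capacity, so that $\stretcha = \min_{(s,t)} \nicefrac{1}{\maxflow{G}(s,t)} = \nicefrac{1}{\maxcap}$, where the minimum ranges only over the pairs with $\maxflow{G}(s,t) > 0$ (an unreachable pair contributes $\nicefrac{1}{0} = \infty$ and is irrelevant to the minimum).

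For the forward direction, assume $E'$ is \prob{MCPS}-feasible. If $t$ is reachable from~$s$ in~$G$, then $\maxflow{G'}(s,t) \geq \stretcha \cdot \maxflow{G}(s,t) = \maxflow{G}(s,t)/\maxcap > 0$, and since $\maxflow{G'}(s,t)$ is a nonnegative integer this forces $\maxflow{G'}(s,t) \geq 1$, i.e.\ $t$ is reachable from~$s$ in~$G'$. As $G' \subseteq G$ can never introduce new reachabilities, $G'$ preserves the reachability relation exactly. For the converse, assume $G'$ preserves reachability and fix a pair $(s,t)$. If $t$ is not reachable from~$s$ in~$G$, the constraint $\maxflow{G'}(s,t) \geq \stretcha \cdot 0 = 0$ is vacuous. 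Otherwise $t$ is reachable from~$s$ in~$G'$, hence $\maxflow{G'}(s,t) \geq 1 \geq \maxflow{G}(s,t)/\maxcap = \stretcha \cdot \maxflow{G}(s,t)$, using $\maxflow{G}(s,t) \leq \maxcap$. Thus $E'$ is \prob{MCPS}-feasible, completing the equivalence.

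Since this is an \emph{Observation}, there is no genuine obstacle; the only point that needs a bit of care is the boundary behaviour of $\stretcha$. The value $\nicefrac{1}{\maxcap}$ lies in the required open interval $(0,1)$ precisely when $\maxcap \geq 2$. If instead $\maxcap \leq 1$, then every reachable pair already has capacity exactly~$1$, so ``preserving reachability'' and ``preserving all capacities exactly'' become the same requirement and the statement degenerates harmlessly; I would simply note that this case is trivial (or exclude it up front). Everything else is the routine bookkeeping carried out above.
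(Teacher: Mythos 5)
Your argument is correct: the paper states this observation without proof, and your feasibility equivalence (integrality of $\maxflow{G'}(s,t)$ forces the retention constraint with $\stretcha = \nicefrac{1}{\maxcap}$ to collapse to reachability preservation, and conversely reachability suffices since $\maxflow{G}(s,t) \leq \maxcap$) is exactly the intended justification. Your care about unreachable pairs and the boundary case $\maxcap = 1$ is a reasonable bonus, not a deviation.
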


There exist several polynomial approximation algorithms for \prob{MED}, which are all based
on the contraction of cycles~\cite{DBLP:journals/siamcomp/KhullerRF95,DBLP:journals/ipl/ZhaoNI03};
the currently best ratio is~1.5~\cite{DBLP:conf/wads/BermanDK09,DBLP:conf/soda/Vetta01}.
Moreover, \prob{MED} can be solved optimally in linear time
on graphs whose \emph{shadow}---the underlying undirected graph obtained by ignoring edge
orientations---is series-parallel~\cite{DBLP:journals/networks/RicheyPR85}, and in
quadratic time on \acp{DAG}~\cite{DBLP:journals/siamcomp/AhoGU72}.
The latter algorithm simply deletes all those edges $uv$ for which there exists another
$u$-$v$-path.

\subsection{Our Contribution}
In this paper, we introduce the natural
problem \prob{MCPS}, which we assume may be of wider interest to the algorithmic
community.

Based on the fact that \prob{MED} is simple on \acp{DAG}, one might expect to similarly find a
polynomial algorithm for \prob{MCPS}, or at least for \prob{MCPS1}, on \acp{DAG} as well.
However, in \Cref{sec:inapproximability} we show that already on \acp{DAG} the arguably most complex core of \prob{MCPS1} cannot even be approximated within a sublogarithmic factor, unless P=NP.
Further, we prove that \prob{MCPS1} is para-NP-hard (i.e., not in the complexity classes FPT or XP, unless P=NP) w.r.t.\ several natural parameters.

In \Cref{sec:lsps} we introduce the class of \emph{\acfp{LSP}}---a
generalization of \acfp{DSP} that also allows, e.g., cycles and dense subgraphs.
\acp{LSP} have the potential to allow
simple algorithms and proofs for a wider array of problems, not just \prob{MCPS}, due to
their structural relation to \acp{DSP}.
For example,
the \prob{MCPS}-analogue of a well-known spanner
property~\cite{DBLP:journals/dcg/AlthoferDDJS93}
holds on \acp{LSP} but not on general graphs:
if the retention constraint is satisfied for all edges, it is also satisfied for
every vertex pair (see~\Cref{th:series_parallel_edge_feasible_cap_spanner}).

In the section thereafter, we complement the hardness result by several algorithmic results:
In \Cref{sec:algorithm_dsp} we present
a simple linear-time algorithm for \prob{MCPS1} on \acp{DSP}
and
a cubic-time dynamic programming algorithm for general \prob{MCPS} on \acp{DSP}.
In \Cref{sec:algorithm_lsp} we give a polynomial-time algorithm to solve \prob{MCPS} on \acp{LSP}, using a natural LSP-decomposition into several \acp{DSP} and the algorithm from the previous subsection as a blackbox. We
also propose a direct quadratic-time algorithm for \prob{MCPS1} on \acp{LSP}.
Lastly, in \Cref{sec:algorithm_other}, we demonstrate further uses of \acp{LSP}: our algorithms can directly be applied to other related problems, and we prove that the algorithm for \prob{MED} on \acp{DAG} described
in~\cite{DBLP:journals/siamcomp/AhoGU72} in fact also works on general
\acp{LSP}.

\section{Inapproximability on DAGs \& Para-NP-Hardness}
\label{sec:inapproximability}

Since a capacity-preserving subgraph always contains at least as many edges as an \ac{MED} (which might
be quite large), \prob{MCPS} can be approximated on sparse graphs by
simply returning an arbitrary feasible solution (i.e., a set of edges such that
the corresponding subgraph satisfies the capacity requirement for every vertex
pair):

\begin{observation}\label{th:cps_approximation_naive}
    Every feasible solution for a connected \prob{MCPS} instance is an \rdel{$\nicefrac{m}{n-1}$}\rnew{$\nicefrac{m}{(n-1)}$}-approximation.
\end{observation}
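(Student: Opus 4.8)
The plan is to split the approximation ratio and bound numerator and denominator separately. The numerator is trivial: any feasible solution is an edge set $E' \subseteq E$, so $|E'| \le m$. Hence it suffices to show $|\opt| \ge n-1$; in fact I would prove the slightly stronger statement that \emph{every} feasible solution has at least $n-1$ edges, from which the claim for the (in particular feasible) optimum follows. The whole argument then reduces to: a feasible solution for a connected instance has a connected shadow.

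For that step, fix a feasible $G' = (V,E')$ and recall that $\stretcha > 0$, so $\maxflow{G'}(u,v) \ge \stretcha \cdot \maxflow{G}(u,v) > 0$ for every pair $(u,v)$ with $\maxflow{G}(u,v) > 0$; i.e., $G'$ preserves the reachability relation of $G$. Now take any edge $uv \in E$: the single-edge path already gives $\maxflow{G}(u,v) \ge \ecap(uv) > 0$, hence $\maxflow{G'}(u,v) > 0$, so $G'$ contains a directed $u$-$v$-path and therefore $u$ and $v$ lie in the same connected component of the shadow of $G'$. Since the instance is connected (i.e., the shadow of $G$ is connected), applying this to the edges of a spanning tree of $G$'s shadow shows that the shadow of $G'$ is connected as well, and a connected graph on $n$ vertices has at least $n-1$ edges. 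Thus $|E'| \ge n-1$ (the statement tacitly assumes $n \ge 2$, being vacuous otherwise), and combining with $|E'| \le m$ yields $|E'|/|\opt| \le m/(n-1)$.

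The proof is short and the only point that needs care is the implication \enquote{feasible $\Rightarrow$ shadow connected}, which is exactly the reachability-preservation observation already used in the preceding paragraph (the same reason a feasible \prob{MCPS} solution always contains a feasible \prob{MED} solution). I would also note in passing why connectedness of the instance is essential here: dropping it allows $|\opt|$ to drop well below $n-1$ (e.g., down to $0$ for an edgeless graph) while $m$ may still be large, so the bound genuinely relies on this hypothesis.
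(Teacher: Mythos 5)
Your proof is correct and follows essentially the same route as the paper: feasibility with $\stretcha>0$ forces the solution to preserve reachability (equivalently, to contain a feasible \prob{MED}), which for a connected instance yields a connected shadow and hence at least $n-1$ edges, against the trivial upper bound of $m$. You merely spell out the reachability-preservation step in more detail than the paper does.
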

\begin{proof}
    Every feasible \prob{MCPS} solution must contain at least as many edges as
    an \ac{MED} to ensure a non-zero capacity for all vertex pairs~$(u,v)$
    where $v$ is reachable from $u$.
    An \ac{MED} of a connected graph is also connected.
    Thus, an optimal \prob{MCPS} solution contains at least $n-1$ edges whereas a feasible
    one contains at most all~$m$~original~edges.
\end{proof}

Hence, it seems sensible to consider a slightly altered version \prob{MCPS*} of
\rdel{the} \prob{MCPS} with a tweaked objective function $|E'| - \rdel{m_\text{MED}}\rnew{\medval}$, which does
not take into account the number of edges $\rdel{m_\text{MED}}\rnew{\medval}$ in an \ac{MED}
but aims at focusing on the problem's core complexity beyond \prob{MED}\rdel{.}\rnew{:}
\rnew{\problemdouble{MCPS*}
{Digraph
$G=(V,E)$, edge capacities~$\ecap \colon E \to \R^+$, $\stretcha \in (0,1)$.}
{Subgraph~$G'=(V,E')$, $E'\subseteq E$, such that\\
\begin{minipage}{\linewidth}\centering $\maxflow{G'}(s,t) \geq \stretcha \cdot \maxflow{G}(s,t)$ for all $(s,t) \in V^2$.\end{minipage}
}{$\varphi \be |E'| - \medval$}
{Let \prob{MCPS*1} be the special case of \prob{MCPS*} where all edge capacities are 1.}
}
\rdel{Let \prob{MCPS*1} be the special case of \prob{MCPS*} with unit edge
capacities. }%
We show that it is NP-hard to approximate \prob{MCPS*1} on \acp{DAG} to within a
sublogarithmic factor using a reduction from the decision variant of \probf{SC}:
\rdel{given a universe $\univ$ and a family of sets $\sets = \{S_i \subseteq \univ \ssep i =
1,\dots,k\}$ with $k \in \bigO(\text{poly}(|\univ|))$, one asks for a
subfamily $\setcover \subseteq \sets$ of minimum cardinality $|\setcover|$ such that
$\Union_{S \in \setcover} S = \univ$.}
\rnew{\problem{SC}
{Universe $\univ$, set family $\sets = \{S_i\}_{i\in
1,\dots,k}$ with $k \in \bigO(\text{poly}(|\univ|))$.}
{Subfamily $\setcover \subseteq \sets$ such that
$\Union_{S \in \setcover} S = \univ$.
}{$|\setcover|$}}%
For an \prob{SC} instance $(\univ, \sets)$, let $\freq(u) \be |\{S \in \sets
\ssep S \ni u\}|$ denote $u$'s
\emph{frequency}, i.e., the number of sets that contain $u$,
and $\freq \be \max_{u \in \univ} \freq(u)$ the maximum frequency.

\begin{theorem}\label{th:cpsx_1_2_inapproximability}
    Any polynomial algorithm can only guarantee an approximation ratio in
    $\Omega(\log|E|)$ for \prob{MCPS*1}, unless P=NP. This already holds on
    \acp{DAG} \rdel{with maximum path length~4}\rnew{where the longest path has length~4}.
\end{theorem}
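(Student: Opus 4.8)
The plan is to reduce from (a frequency-regular variant of) the decision version of \prob{SC}, which for every fixed $\epsilon>0$ admits no polynomial $(1-\epsilon)\ln|\univ|$-approximation unless P\,=\,NP, and whose number $k$ of sets is polynomial in $|\univ|$. By a standard preprocessing (contracting elements of frequency~$1$ and equalizing the remaining frequencies) I may assume that every element has the same frequency $\freq\ge 3$, without losing log-hardness.

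Given such an instance with $\univ=\{u_1,\dots,u_n\}$ and $\sets=\{S_1,\dots,S_k\}$, I would construct a \ac{DAG} $G$ with unit capacities: a source~$s$, a single hub~$p$, a vertex~$x_i$ per set, a vertex~$y_j$ per element, and the edges $s\to p$, $p\to x_i$ (all $i$), $x_i\to y_j$ (whenever $u_j\in S_i$), and the \emph{selector edges} $s\to x_i$ (all $i$). The graph is simple and its longest path $s\to p\to x_i\to y_j$ has length~$3$ (pad to~$4$ if one insists on meeting the stated bound exactly). As retention ratio pick any $\stretcha\in(\nicefrac{1}{\freq},\nicefrac{2}{\freq}]\intersect(0,\nicefrac{1}{2}]$, which is nonempty since $\freq\ge 3$, so that $\ceil{\stretcha\cdot\freq}=2$ while $\ceil{\stretcha\cdot 2}=1$.

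For correctness I would argue as follows. On a \ac{DAG} the \prob{MED} equals the unique transitive reduction, and here it is exactly $G$ minus all $k$ selector edges: the edge $s\to x_i$ is dominated by $s\to p\to x_i$, while every other edge is the sole path between its endpoints. Hence $\medval=|E|-k$, and since any feasible solution $E'$ (with subgraph $G'$) must preserve all reachability, it contains the transitive reduction; its \prob{MCPS*} objective $|E'|-\medval$ thus equals the number of selector edges in $E'$. The choice $\stretcha\le\nicefrac{1}{2}$ makes every retention requirement other than those of the pairs $(s,y_j)$ satisfiable already by the transitive reduction: for $(s,x_i)$ one has $\maxflow{G}(s,x_i)=2$ but only needs $\ceil{2\stretcha}=1$, and every remaining pair has its exact value preserved by the transitive reduction. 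So feasibility reduces to $\maxflow{G'}(s,y_j)\ge 2$ for all $j$. Within the transitive reduction every $s$-$y_j$ path uses the single edge $s\to p$, giving max $s$-$y_j$-flow~$1$; a second edge-disjoint $s$-$y_j$ path must have the form $s\to x_i\to y_j$, which requires a selector edge $s\to x_i\in E'$ with $u_j\in S_i$ (here $\freq\ge 2$ is used, to route the first path through a different set containing $u_j$). Therefore the selector edges of any feasible $E'$ induce a set cover, and conversely the transitive reduction together with the selector edges of any set cover is feasible. Consequently $\mathrm{OPT}$ of \prob{MCPS*} on $(G,\stretcha)$ equals the minimum size of a set cover of $(\univ,\sets)$.

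Composing both directions, a polynomial $\rho$-approximation for \prob{MCPS*} yields, by reading off the kept selector edges, a set cover of size at most $\rho$ times the optimum; since $|E|$ is polynomial in $|\univ|$, we have $\log|E|=\Theta(\log|\univ|)$, so an $o(\log|E|)$-approximation for \prob{MCPS*} would give an $o(\log|\univ|)$-approximation for \prob{SC} and thus P\,=\,NP. The restriction to \acp{DAG} of longest-path length~$4$ is built into~$G$. I expect the crux to be the capacity bookkeeping behind the construction: choosing $G$ and $\stretcha$ so that the transitive reduction loses capacity \emph{only} on the pairs $(s,y_j)$, and only down to~$1$, while no other pair (above all $(s,x_i)$) forces any selector edge --- this is what pins down the admissible range of $\stretcha$ and motivates the frequency-regularity preprocessing of the \prob{SC} instance. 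A further routine but necessary step is a normalization argument showing that every feasible solution can be rewritten, at no extra cost, as the transitive reduction plus a union of selector edges, which is exactly what lets the backward direction extract a set cover from an arbitrary feasible solution.
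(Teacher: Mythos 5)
Your reduction is correct, but it is a genuinely different construction from the paper's. The paper keeps $\stretcha=\nicefrac12$ fixed and, for each element $u$ and each set $S\ni u$, inserts \emph{two} parallel length-2 $v_u$-$v_S$ paths plus one length-2 $v_S$-$t$ path, so that $\maxflow{G}(v_u,t)=2\freq(u)+1$ is odd and the requirement $\ceil{\nicefrac12(2\freq(u)+1)}=\freq(u)+1$ exceeds the \prob{MED}'s capacity by exactly one \emph{for every frequency}; the missing unit is supplied either by a per-element \enquote{red} edge $v_ut$ or a shared \enquote{green} edge $v_St$, and the red edges are what make the backward direction (arbitrary feasible solution $\to$ cover of no larger value) a non-trivial exchange argument. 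You instead route everything through a single capacity-1 hub $s\to p$ so that the transitive reduction gives each $(s,y_j)$ capacity exactly $1$, and you calibrate $\stretcha$ so the demand jumps to $2$; this forces you to (i) regularize the frequencies so that a single $\stretcha$ gives $\ceil{\stretcha\freq}=2$ and $\ceil{2\stretcha}=1$ simultaneously, and (ii) use an instance-dependent $\stretcha$, whereas the paper's gadget needs neither and moreover yields for free the corollary on bounded capacities (via frequency-2 instances, i.e.\ \prob{VC}) and the generalization to $\stretcha=\nicefrac{p}{p+1}$. In exchange your graph is smaller and has no analogue of the red edges, which makes the backward direction essentially trivial: since $E$ is exactly the transitive reduction plus the selector edges, every feasible solution \emph{is} already in normal form. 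The one step you should not leave as \enquote{standard} is the frequency regularization: spell out that either the known $(1-\epsilon)\ln|\univ|$-hard instances are already frequency-regular by construction, or that one can pad each element $u$ with $F-\freq(u)$ dominated singleton sets $\{u\}$ (allowed since $\sets$ is an indexed family), which raises all frequencies to $F=\max\{\freq,3\}$ without changing the optimum and with only a polynomial blow-up in $k$ --- without some such argument the choice of a single admissible $\stretcha\in(\nicefrac1\freq,\nicefrac2\freq]\intersect(0,\nicefrac12]$ is not available. Also double-check, as you implicitly do, that the pairs $(p,y_j)$ with demand $2$ are already served by the transitive reduction (they are, since all $\freq$ paths $p\to x_i\to y_j$ survive), so that only the pairs $(s,y_j)$ ever force selector edges.
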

\begin{proof}
    We give a reduction from \prob{SC} to \prob{MCPS*1} on
    \acp{DAG} such that any feasible solution for the new \prob{MCPS*1} instance
    can be transformed into a feasible solution for the original \prob{SC}
    instance with an equal or lower objective function value in linear time.
    The size $|E|$ of our \prob{MCPS*1} instance is linear in the size $N \in
    \bigO(|\univ| \cdot k) = \bigO(|U|^r)$
    of the \prob{SC} instance, i.e., $|E|=c\cdot|U|^r$ for some
    constants~$c$, $r$:
    if it was possible to approximate \prob{MCPS*1} on \acp{DAG} within a factor
    in $o(\log|E|) = o(\log (c\cdot|U|^r)) = o(\log|U|)$, one
    could also approximate \prob{SC} within $o(\log|U|)$.
    However, it is NP-hard to approximate \prob{SC} within a factor of
    $\epsilon\ln(|U|)$ for any positive $\epsilon <
    1$~\cite{DBLP:conf/stoc/DinurS14,DBLP:journals/toc/Moshkovitz15}.
    To create the \prob{MCPS*1} instance $(G,\stretcha)$, let
    $\stretcha \be \frac{1}{2}$ and construct $G$ as follows (see
    \Cref{fig:set_cover_to_cps_reduction} for a visualization of $G$):
    \begin{align*}
        G &\be (V,E) \mathrlap{\text{ with } V\be V_U \union V^U_\sets \union V_\sets \union V^\sets_t \union \{t\} \text{ and } E\be E_U \union E_\sets
        \union E_\greens \union E_\reds} &&\\
        V_U &\be \{v_{u} \mid \forall  u \in \univ\} &
        V_\sets &\be \{v_S \mid \forall S \in \sets\} \\
        V^U_\sets &\be \{x^u_S, y^u_S \mid \forall S \in \sets, u \in S\} &
        V^\sets_t &\be \{z_S \mid \forall S \in \sets\} \\
        E_U &\be \{v_ux^u_S, v_uy^u_S, x^u_Sv_S, y^u_Sv_S \mid \forall S \in \sets, u \in S\} &
        E_\sets &\be \{v_Sz_S,z_St \mid s \in \sets\} \\
        E_\greens &\be V_\sets \times \{t\} & %
        E_\reds &\be V_U \times \{t\} %
    \end{align*}
    As $G$ is a \ac{DAG}, its \ac{MED} is unique~\cite{DBLP:journals/siamcomp/AhoGU72}.
    This \ac{MED} is formed by $E_U \union E_\sets$
    and already covers all vertex pairs except $V_U \times \{t\}$. Let $(v_u,t)
    \in V_U \times \{t\}$: Its capacity in $G$ is $2\freq(u) + 1$
    and this pair thus requires a capacity of $\ceil{\frac{1}{2} \cdot
    (2\freq(u) + 1)} = \freq(u)+1$ in the solution.
    Since the \ac{MED} already has a $v_u$-$t$-capacity of~$\freq(u)$,
    only one additional edge is needed to satisfy the capacity requirement: either the
    \emph{corresponding red edge} $v_ut \in E_\reds$, or one of the \emph{corresponding
    green edges} $\{v_St \in E_\greens \ssep S \ni u\}$.
    After choosing a corresponding red or green edge for each item $u \in \univ$,
    the number of these edges is the value of the resulting solution.

    Given an \prob{SC} solution $\setcover$, we can construct an
    \prob{MCPS*1} solution
    $E_U \union E_\sets \union \{v_St \in E_\greens \ssep S \in \setcover\}$
    with the same value.
    Since every item is covered by the sets in $\setcover$, the constructed
    \prob{MCPS*1} solution includes at least one corresponding green edge for
    each item, ensuring its feasibility.

    To turn a feasible solution $E'$ for the \prob{MCPS*1} instance into a
    feasible solution for the original \prob{SC} instance with an equal or lower
    value, we remove all red edges $v_ut \in E_\reds$ from the \prob{MCPS*1}
    solution and replace each of them---if necessary---by one green edge
    $v_St \in E_\greens$ with $S \ni u$.
    Since the \prob{MCPS*1} solution has at least one corresponding green edge
    for each item $u \in \univ$, the resulting \prob{SC} solution $\{S \ssep
    v_St \in E_\greens \intersect E'\}$ also contains at least one covering set
    for each item.
\end{proof}

\begin{figure}[bt]
\tikzset{twopaths/.style={%
    to path={
    \pgfextra{%
        \pgfmathsetmacro{\startf}{-(#1-1)/2}
        \pgfmathsetmacro{\endf}{-\startf}
        \pgfmathsetmacro{\stepf}{\startf+1}
    }%
    let \p{mid}=($(\tikztostart)!0.7cm!(\tikztotarget)$) in%
    \foreach \i in {\startf,\stepf,...,\endf} {%
        (\tikztostart) -- ($ (\p{mid})!\i*6pt!90:(\tikztotarget) $) node[dot] {} -- (\tikztotarget)
    }
    \tikztonodes
}}}
\centering
\begin{tikzpicture}[scale=0.7]
        \begin{scope}[every node/.style={circle,minimum size=1pt,inner sep=1pt}]
            \node (a) at (-3,4) {$a$};
            \node (b) at (-1,4) {$b$};
            \node (c) at (1,4) {$c$};
            \node (d) at (3,4) {$d$};
            \node (s1) at (-2,2) {$S_1$};
            \node (s2) at (0,2) {$S_2$};
            \node (s3) at (2,2) {$S_3$};
            \node[dot] (s1_1) at (-1,1) {};
            \node[dot] (s2_1) at (0,1) {};
            \node[dot] (s3_1) at (1,1) {};
            \node (t) at (0,0) {$t$};
        \end{scope}

        \begin{scope}[every edge/.style={draw=black, very thick}]
            \path (a) edge[twopaths=2] (s1);
            \path (b) edge[twopaths=2] (s1);
            \path (b) edge[twopaths=2] (s3);
            \path (c) edge[twopaths=2] (s1);
            \path (c) edge[twopaths=2] (s2);
            \path (c) edge[twopaths=2] (s3);
            \path (d) edge[twopaths=2] (s2);
            \path (s1) edge (s1_1); \path (s1_1) edge (t);
            \path (s2) edge (s2_1); \path (s2_1) edge (t);
            \path (s3) edge (s3_1); \path (s3_1) edge (t);
        \end{scope}

        \begin{scope}[every path/.style={draw=mygreen, very thick,densely dotted,bend left=30}]
                \path (s1) edge (t);
                \path (s2) edge (t);
                \path (s3) edge (t);
        \end{scope}

        \begin{scope}[every path/.style={draw=myred, very thick,densely dashed,rounded corners}]
                \draw[very thick,bend right=75] (b.120) .. controls (-6,7.0) and (-4,-1) .. (t.210);
                \draw[very thick,bend right=75] (c.60) .. controls (6,7.0) and (4,-1) .. (t.-30);
                \path[bend right=50] (a) edge[-] (t);
                \path[bend left=50] (d) edge[-] (t);
        \end{scope}
\end{tikzpicture}
\caption{\prob{MCPS*1} instance constructed from the \prob{SC} instance
    $(\univ,\sets)$ with $\univ = \{a,b,c,d\}$,
    $\sets=\{\{a,b,c\},\linebreak[1]\ \{c,d\},\ \{b,c\}\}$.
    An optimal solution contains the \ac{MED} (shown as solid black lines) as well as
    one corresponding red (dashed) or green (dotted) edge for each~$u \in \univ$.
    Edges are directed from upper to lower vertices.}
    \label{fig:set_cover_to_cps_reduction}
\end{figure}

Clearly, the above reduction in particular also shows NP-hardness of \prob{MCPS1} on \acp{DAG}:
an optimal \prob{SC} solution $\{S \ssep v_St \in E_G \intersect E'\}$ can
be easily obtained from an optimal solution~$E'$ for the
\prob{MCPS1}~instance~$(G,\stretcha)$, $\stretcha=\nicefrac{1}{2}$.

\begin{corollary}
    \prob{MCPS1}, and thus also \prob{MCPS}, is NP-hard already on \acp{DAG}.
\end{corollary}

Further, the NP-hard \prob{MCPS1} instances created in our reduction are in fact very restricted \acp{DAG}.
They reveal that \prob{MCPS1} is \emph{para-NP-hard} with
respect to several parameters~$\psi$ given below, i.e., still NP-hard even when
the set of instances is restricted to those where~$\psi$ is bounded.
Consequently, \prob{MCPS1} and \prob{MCPS} are
neither in the complexity class XP (\enquote{slicewise polynomial}) nor in its
subclass FPT (\enquote{fixed-parameter tractable}) w.r.t.\ these~$\psi$
(see~\cite{DBLP:series/txtcs/FlumG06} for a primer on these complexity classes).
We refrain from giving proper definitions for all of the parameters below as this would be out-of-scope; please refer to the respective citations and the in-depth
digraph parameter comparison of~\cite{DBLP:phd/dnb/Rehs22} for more information.

\begin{corollary}
    \prob{MCPS1}, and thus also \prob{MCPS}, is para-NP-hard w.r.t.\ the
    following parameters:
    \begin{enumerate}
        \item cycle rank, %
            directed path-width, directed tree-width, DAG-width, Kelly-width,
            D-width~\cite{DBLP:journals/jct/GanianHK0ORS16},
        \item directed cut-width~\cite{DBLP:journals/jct/ChudnovskyFS12},
        \item directed feedback vertex set number and directed feedback arc set number~\cite{DBLP:conf/coco/Karp72},
        \item maximum directed path length and DAG-depth as
            defined by Ganian et al.~\cite{DBLP:journals/dam/GanianHKLOR14},
        \item maximum capacity $\max_{(u,v) \in V(G)^2} \maxflow{G}(u,v)$,
        \item K-width (i.e., the maximum number of distinct, but not necessarily
            disjoint $s$-$t$-paths over all pairs of vertices $s,t \in V(G),
            s\neq t$)~\cite{DBLP:journals/dam/GanianHKLOR14},
        \item maximum indegree and maximum outdegree,
        \item all parameters that are bounded on planar graphs, and
        \item $\psi$ such that an optimal \prob{MCPS1} solution has $n+\psi$ edges.
    \end{enumerate}
\end{corollary}
\begin{proof}
    The parameters~1--3 are bounded on
    \acp{DAG}~\cite{DBLP:journals/jct/GanianHK0ORS16, DBLP:phd/dnb/Rehs22}.
    Considering the parameters~4, the instances constructed in our reduction have maximum path length~4, and the maximum path length is an upper bound on the
    DAG-depth as defined by Ganian et al.~\cite{DBLP:journals/dam/GanianHKLOR14}.
    For the parameters~5 and~6, note that \probl{SC} is already NP-hard for
    the frequency~$\freq=2$ (in the form of \probl{VC}
    \cite{DBLP:books/fm/GareyJ79,DBLP:conf/coco/Karp72}), and observe that the
    largest capacity between any two vertices in~$G$ is~$2\freq+1$, and the
    K-width is at most~$4\freq+1$.

    Lastly, for the parameters~7--9, consider a strongly connected planar
    graph~$G$ where the indegree and outdegree of every vertex is bounded by 2.
    An optimal solution for the \prob{MCPS1} instance $(G,\stretcha)$
    with~$\stretcha = \min_{(s,t) \in V^2} \nicefrac{1}{\maxflow{G}(s,t)}$ would
    be a minimum subgraph of $G$ that has a positive $s$-$t$-capacity, and thus
    contains an $s$-$t$-path, for all $s,t \in V, s\neq t$.
    Clearly, this optimal solution would be a directed Hamilton cycle if one exists.
    However, identifying the latter is NP-hard, even under the given restrictions on
    $G$~\cite{DBLP:journals/ipl/Plesnik79}.
    In light of parameter~9, also note that an optimal \prob{MCPS1} solution for
    the given instance could only contain exactly~$n$ edges if it is a directed Hamilton cycle.
\end{proof}

Lastly, the reduction used by \Cref{th:cpsx_1_2_inapproximability} for \prob{MCPS*1} with $\stretcha = \frac{1}{2}$
can be generalized to \prob{MCPS*1} for every
$\stretcha = \frac{p}{p+1}$ with $p \in \N_{> 0}$.
This only requires a small change in the construction:
$E_U$ must contain $p+1$ $v_u$-$v_S$-paths of length 2 for all
$(v_u,v_S) \in V_U \times V_\sets$, and $E_\sets$ must contain $p$
$v_S$-$t$-paths of length~2~for~all~$v_S \in V_\sets$.

\section{Laminar Series-Parallel Graphs}
\label{sec:lsps}
In this section, we introduce \acfp{LSP}---a rich graph family that not only
includes \acfp{DSP} but also cyclic graphs and graphs with multiple
sources and sinks.
A (directed) graph $G$ is \emph{(directed) $s$-$t$-series-parallel ($s$-$t$-(D)SP)}
if and only if it is a single edge~$st$ or there exist two (directed, resp.)
$s_i$-$t_i$-series-parallel graphs $G_i$, $i \in \{1,2\}$, such that $G$
can be created from their disjoint union by one of the following operations~\cite{Duffin1965}:
\begin{enumerate}
    \item P-composition: Identify $s_1$ with $s_2$ and $t_1$ with $t_2$.
        Then, $s = s_1$ and $t = t_1$.
    \item S-composition: Identify $t_1$ with $s_2$.
        Then, $s = s_1$ and $t= t_2$.
\end{enumerate}
There also exists a widely known forbidden subgraph characterization of \acp{DSP}\rnew{, which relies on the notion of a \emph{subdivision}}:
\rnew{Let an \emph{edge subdivision operation} be the deletion of an edge~$uw$ and its replacement by a new subdivision vertex~$v$ and two new directed edges~$uv,vw$. A \emph{subdivision of a graph}~$H$ is any graph that can be created by performing a sequence of edge subdivision operations, starting with~$H$. The central ingredient of the following characterization is the characterizing digraph chiefly called $W$, depicted in \Cref{fig:graph_w}.}

\begin{theorem}[see \cite{DBLP:journals/siamcomp/ValdesTL82}]\label{th:series_parallel_forbidden_graph}
    A digraph $G=(V,E)$ is a \ac{DSP} if and only if it
    is acyclic, has exactly one source, exactly one sink, and $G$ does not
    contain a subgraph \rdel{homeomorphic to}\rnew{that is a subdivision of the digraph}~$W$ (displayed in
    \rdel{\Cref{fig:dsp_forbidden}}\rnew{\Cref{fig:graph_w}})\rdel{, i.e., a subgraph that is a subdivision of~$W$}.
\end{theorem}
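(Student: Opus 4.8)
The plan is to prove the two implications separately; the forward direction is a routine structural induction, while the backward direction carries all the weight.

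For the \emph{only if} direction I would induct on the number of S-/P-compositions used to build the \ac{DSP} $G$. A single edge $st$ is trivially acyclic, has exactly one source and one sink, and on two vertices cannot host a subdivision of $W$ (which has four branch vertices). For the inductive step, let $G$ be an S- or P-composition of $G_1$ and $G_2$, each of which, by induction, has all four properties. Acyclicity is inherited because the topological orders of $G_1$ and $G_2$ can be concatenated (S-case) or amalgamated along the shared terminals (P-case), and any directed cycle in $G$ would have to use only shared vertices, i.e.\ terminals, which are extremal; uniqueness of source and sink is immediate from the two operations. For the absence of a $W$-subdivision it is cleanest to route through the well-known \emph{reduction characterisation}: repeatedly contracting an internal vertex of in- and out-degree $1$ (series reduction) or merging two parallel edges (parallel reduction) turns any \ac{DSP} into the single edge $st$, and — since a subdivision of $W$ is simple and each of its branch vertices has degree at least $3$ — such a reduction can neither create nor destroy a $W$-subdivision. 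Hence a $W$-subdivision in $G$ would survive down to $st$, which is impossible.

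For the \emph{if} direction I would argue by induction on $|E|$, showing that an acyclic $G$ with a unique source $s$, a unique sink $t$, and no $W$-subdivision is either the single edge $st$ (the base case, forced by the uniqueness of the terminals) or decomposes as an S- or P-composition of two strictly smaller graphs that again satisfy all four properties; note that every subgraph of $G$ inherits acyclicity and the absence of a $W$-subdivision, so the inductive hypothesis applies to the pieces. Using that $s$ is the only source and $t$ the only sink, every vertex of $G$ lies on some $s$-$t$ path. If some internal vertex $v$ lies on \emph{every} $s$-$t$ path, then the set $A$ of vertices reachable from $s$ by a path avoiding $v$ and the set $B$ of the remaining non-$v$ vertices, each augmented by $v$, partition $V$ with the only edges across the split running from $A$ through $v$ into $B$ (a backward edge would close a directed cycle through $v$); this exhibits $G$ as the S-composition of $G[A\cup\{v\}]$ (unique terminals $s,v$) and $G[B\cup\{v\}]$ (unique terminals $v,t$), both strictly smaller. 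Otherwise no internal vertex is an $s$-$t$ cut vertex, and then either the edge $st$ is present — so $G$ is the P-composition of the edge $st$ and $G-st$ — or the underlying undirected graph of $G-\{s,t\}$ is disconnected, in which case grouping its components into two nonempty families yields a P-composition — or we are in the remaining ``indecomposable'' case, which must be ruled out.

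The main obstacle is exactly this last case: an acyclic $G$ with distinct unique terminals $s,t$, with $st\notin E$, with no internal $s$-$t$ cut vertex, and with $G-\{s,t\}$ weakly connected, must be shown to contain a subdivision of $W$ — equivalently, every irreducible acyclic digraph with at least two edges and a unique source and sink contains one. The idea is to take two internally vertex-disjoint $s$-$t$ paths $P_1,P_2$ (available by Menger's theorem, since the minimum internal $s$-$t$ vertex cut has size at least $2$, and both paths have interior vertices because $st\notin E$), and then a shortest directed path $Q$ joining the interior of $P_1$ to the interior of $P_2$ inside $G-\{s,t\}$; then $s$, $t$, and the two endpoints $a\in P_1$, $b\in P_2$ of $Q$ serve as the four branch vertices of a $W$-subdivision, the five connecting paths being the two sub-paths of $P_1$ at $a$, the two sub-paths of $P_2$ at $b$, and $Q$ itself. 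Making this watertight is the delicate point: one must guarantee that such a $Q$ can be chosen \emph{directed} and with the orientation matching the crossing edge of $W$ (exploiting acyclicity and, if necessary, re-choosing $P_1,P_2$), that the four branch vertices are genuinely distinct, and that when $a$ or $b$ would coincide with $s$ or $t$ one instead falls back to the S-composition case. Folding these degenerate subcases back into the induction — each of them produces a smaller S- or P-composition — is what finally closes the argument.
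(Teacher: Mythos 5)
First, note that the paper does not prove this statement at all: it is quoted as a known result of Valdes, Tarjan and Lawler, so there is no in-paper proof to compare against and your proposal has to stand on its own. On that footing, your forward direction is fine (both the reduction-based argument and a direct induction work, since a $W$-subdivision is $2$-connected and cannot straddle the cut vertex of an S-composition or the terminal pair of a P-composition), and the decomposable cases of the backward direction — the S-split at an internal vertex lying on every $s$-$t$-path, the P-split when $st\in E$ or when $G-\{s,t\}$ is disconnected — are correct modulo routine checks of the "unique source/sink" hypotheses for the pieces.

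The genuine gap is exactly where you flag it, and it is not a "delicate point" but the entire content of the hard direction: in the indecomposable case you need a \emph{directed} path $Q$ from the interior of one of the two disjoint $s$-$t$-paths to the interior of the other, and weak connectivity of $G-\{s,t\}$ only yields an undirected connection. For a fixed choice of $P_1,P_2$ such a $Q$ need not exist. Concretely, take $V=\{s,x,y,z,t\}$ with edges $sx$, $sy$, $xt$, $yt$, $xz$, $yz$, $zt$, and $P_1=s\to x\to t$, $P_2=s\to y\to t$: the interiors are $\{x\}$ and $\{y\}$, and there is no directed path between them in either direction inside $G-\{s,t\}$, even though the graph does contain a $W$-subdivision (branch vertices $s,x,z,t$ with paths $s\to x$, $s\to y\to z$, $x\to z$, $x\to t$, $z\to t$) — which your recipe only finds after re-choosing $P_2$ to run through $z$. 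Your escape hatch "if necessary, re-choosing $P_1,P_2$" is therefore doing all the work, and you give no argument that a suitable re-choice always exists; proving that (e.g., by taking an $s$-$t$-path through an intermediate vertex of the undirected connector and analysing where it first leaves and re-enters $P_1\cup P_2$, which spawns further cases such as "ears" attached at $s$ or $t$) is precisely the case analysis that makes the Valdes–Tarjan–Lawler proof nontrivial. As written, the proposal is a correct skeleton with its central step missing.
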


\begin{figure}[p]
\begin{subfigure}[t]{.385\textwidth}
\begin{center}
\begin{tikzpicture}[scale=1.5]
        \begin{scope}[every node/.style={dot}]
            \node (a) at (0,0) {};
            \node (b) at (1,-0.5) {};
            \node (c) at (1,0.5) {};
            \node (d) at (2,0) {};
        \end{scope}
        \node[label, left=0mm of a] (la) {$x$};
        \node[label, below=0mm of b] (lb) {$z_1$};
        \node[label, above=0mm of c] (lc) {$z_2$};
        \node[label, right=0mm of d] (ld) {$y$};

        \begin{scope}[every edge/.style={edge}]
            \path (a) edge (b);
            \path (a) edge (c);
            \path (b) edge (d);
            \path (c) edge (d);
            \path (b) edge (c);
        \end{scope}
\end{tikzpicture}
\end{center}
\caption{The digraph $W$}
\label{fig:graph_w}
\end{subfigure}
\begin{subfigure}[t]{.585\textwidth}
\begin{center}
\begin{tikzpicture}[scale=1.5]
        \begin{scope}[every node/.style={dot}]
            \node (a) at (0,0) {};
            \node (b) at (1,-0.5) {};
            \node (c) at (1,0.5) {};
            \node (d) at (2,0) {};
            \node (ab1) at (0.5,-0.25) {};
            \node (bc1) at (1,-0.165) {};
            \node (bc2) at (1,0.165) {};
            \node (cd1) at (1.5,0.25) {};
        \end{scope}
        \node[label, left=0mm of a] (la) {$x$};
        \node[label, below=0mm of b] (lb) {$z_1$};
        \node[label, above=0mm of c] (lc) {$z_2$};
        \node[label, right=0mm of d] (ld) {$y$};

        \begin{scope}[every edge/.style={edge}]
            \path (a) edge (ab1);
            \path (ab1) edge (b);
            \path (b) edge (bc1);
            \path (bc1) edge (bc2);
            \path (bc2) edge (c);
            \path (a) edge (c);
            \path (b) edge (d);
            \path (c) edge (cd1);
            \path (cd1) edge (d);
        \end{scope}
\end{tikzpicture}
\end{center}
\caption{\rnew{An example of a subdivision of the digraph $W$}}
\label{fig:graph_w_subdivision}
\end{subfigure}
\caption{The digraph $W$, whose subdivisions cannot be contained in~\acp{DSP}.}%
\label{fig:dsp_forbidden}
\end{figure}
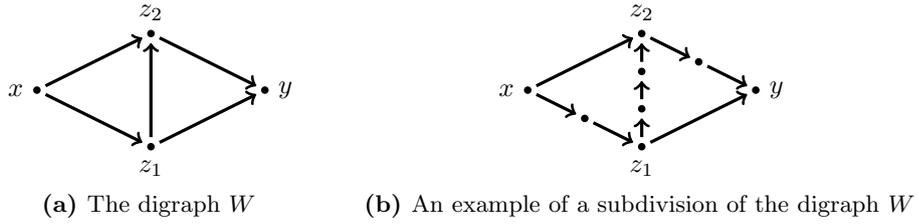
\begin{figure}
\begin{center}
\begin{tikzpicture}[scale=1.5]
        \begin{scope}[every node/.style={dot}]
            \node (a) at (0,0) {};
            \node (b) at (1,-0.5) {};
            \node (c) at (1,0.5) {};
            \node (d) at (2,0) {};
            \node (e) at (0.35,-0.45) {};
            \node (f) at (1.65,0.45) {};
        \end{scope}
        \node[label, left=0mm of a] (la) {$x$};
        \node[label, below=0mm of b] (lb) {$z_1$};
        \node[label, above=0mm of c] (lc) {$z_2$};
        \node[label, right=0mm of d] (ld) {$y$};

        \begin{scope}[every node/.style={fill=white,rectangle},
            every edge/.style={draw=black, edge}]
                \path (b) edge (c);
                \path (a) edge[bend right=10] (e);
                \path (e) edge[bend right=10] (b);
                \path (c) edge[bend left=10] (f);
                \path (f) edge[bend left=10] (d);
        \end{scope}

        \begin{scope}[every node/.style={fill=white,rectangle},
            every edge/.style={edge, draw=mygreen, densely dotted}]
            \path (a) edge (b);
            \path (b) edge (d);
            \path (a) edge (c);
            \path (c) edge (d);
        \end{scope}
\end{tikzpicture}
\end{center}
\caption{The digraph~$W$ with two added paths of length 2, see \Cref{th:obs_cps_spshadows}. Assume unit edge capacities.
Then, the $x$-$y$-capacity is 3.
For $\stretcha = \frac{1}{2}$, all edges of the original graph are covered by the
\ac{MED} (solid black edges) but the non-adjacent vertex pair~$(x,y)$ is not.
Observe that the graph is not a \ac{DSP} but its shadow is $z_1$-$z_2$-series-parallel ($(x,y) \neq (z_1,z_2)$).}
\label{fig:graph_w_modification}
\end{figure}
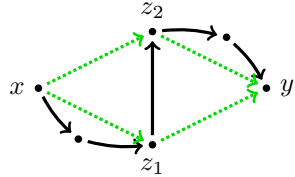
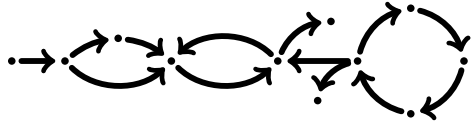
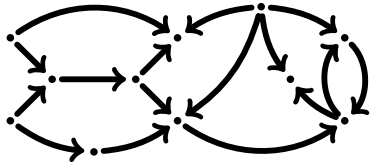
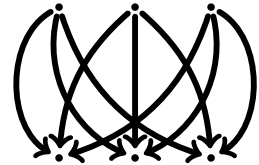
\begin{figure}[p]
    \captionsetup[subfigure]{justification=centering}
    \centering
    \begin{subfigure}[b]{\textwidth}
    \begin{center}
        \begin{tikzpicture}[scale=0.7]
                \begin{scope}[every node/.style={dot}]
                    \node (a) at (1,1.43) {};
                    \node (b) at (2,1) {};
                    \node (c) at (4,1) {};
                    \node (c1) at (5.5,1) {};
                    \node (d) at (7.5,1) {};
                    \node (e) at (6.5,0) {};
                    \node (f) at (6.5,2) {};
                    \node (g) at (0,1) {};
                    \node (g2) at (-1,1) {};
                    \node (h1) at (5,1.75) {};
                    \node (h2) at (4.75,0.25) {};
                \end{scope}

                \begin{scope}[every edge/.style={dsp}]
                    \path (a) edge[bend left=15] (b);
                    \path (c) edge[bend right=45] (b);
                    \path (b) edge[bend right=45] (c);
                    \path (c1) edge (c);
                    \path (c1) edge[bend left=30] (f);
                    \path (f) edge[bend left=30] (d);
                    \path (d) edge[bend left=30] (e);
                    \path (e) edge[bend left=30] (c1);
                    \path (g) edge[bend right=45] (b);
                    \path (g) edge[bend left=15] (a);
                    \path (g2) edge (g);
                    \path (c) edge[bend left=30] (h1);
                    \path (c1) edge[bend right=30] (h2);
                \end{scope}
        \end{tikzpicture}
    \end{center}
    \caption{Graph whose biconnected components are all \acp{DSP} or cyclic \acp{DSP}}
    \end{subfigure}\\[0.5cm]
    \begin{subfigure}[t]{.53\textwidth}
    \begin{center}
        \begin{tikzpicture}[scale=0.55]
                \begin{scope}[every node/.style={dot}]
                    \node (a) at (0,0) {};
                    \node (b) at (6,0.75) {};
                    \node (c) at (4,0) {};
                    \node (d) at (1,-1) {};
                    \node (e) at (3,-1) {};
                    \node (f) at (0,-2) {};
                    \node (g) at (2,-2.75) {};
                    \node (h) at (4,-2) {};
                    \node (i) at (8,0) {};
                    \node (k) at (8,-2) {};
                    \node (l) at (6.7,-1) {};
                \end{scope}

                \begin{scope}[every edge/.style={dsp}]
                    \path (a) edge[bend left=35] (c);
                    \path (a) edge (d);
                    \path (d) edge (e);
                    \path (e) edge (c);
                    \path (f) edge (d);
                    \path (e) edge (h);
                    \path (f) edge[bend right=15] (g);
                    \path (g) edge[bend right=15] (h);
                    \path (b) edge[bend right=15] (c);
                    \path (b) edge[bend left=15] (i);
                    \path (h) edge[bend right=35] (k);
                    \path (i) edge[bend left=45] (k);
                    \path (k) edge[bend left=45] (i);
                    \path (b) edge[bend left=20] (h);
                    \path (b) edge[bend right=15] (l);
                    \path (k) edge[bend left=15] (l);
                \end{scope}
        \end{tikzpicture}
    \end{center}
    \caption{A more complex biconnected \ac{LSP}}
    \end{subfigure}\hfill
    \begin{subfigure}[t]{.43\textwidth}
    \begin{center}
        \begin{tikzpicture}[scale=1.0, transform shape, rotate=-90]
                \begin{scope}[every node/.style={dot}]
                    \node (a) at (0,0) {};
                    \node (b) at (0,1) {};
                    \node (c) at (0,2) {};
                    \node (d) at (2,0) {};
                    \node (e) at (2,1) {};
                    \node (f) at (2,2) {};
                \end{scope}

                \begin{scope}[every edge/.style={dsp}]
                    \path (a) edge[bend right=60] (d);
                    \path (a) edge[bend right=40] (e);
                    \path (a) edge[bend right=27] (f);
                    \path (b) edge[bend right=22] (d);
                    \path (b) edge (e);
                    \path (b) edge[bend left=22] (f);
                    \path (c) edge[bend left=27] (d);
                    \path (c) edge[bend left=40] (e);
                    \path (c) edge[bend left=60] (f);
                \end{scope}
        \end{tikzpicture}
    \end{center}
    \caption{Complete bipartite graph with natural orientation}
    \end{subfigure}
    \caption{Examples of \acp{LSP}. Every edge represents a \ac{DSP} of arbitrary size.}%
\label{fig:lsp_examples}
\end{figure}

\FloatBarrier
Given a directed graph~$G = (V,E)$, for every vertex pair $(u,v) \in V^2$,
let $\pathind{G}{u,v}$ be the graph induced by the edges on
$u$-$v$-paths.
Note that such a path-induced subgraph may contain cycles but a single path may
not.
If $e=uv$ is an edge, we call $\pathind{G}{u,v}$ an \emph{\ac{EAS}} and may
use the shorthand notation~$\pathind{G}{e}$.
Based on these notions, we can define \acp{LSP}:

\begin{definition}[Laminar Series-Parallel Graph]
    A directed graph $G=(V,E)$ is a \emph{\acf{LSP}} if and only if it
    satisfies:
    \begin{prop}\label{th:lsp_psp}
        For every $(s,t) \in V^2$, $\pathind{G}{s,t}$
        is either an $s$-$t$-\ac{DSP} or contains no edges; and
    \end{prop}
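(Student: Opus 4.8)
The plan is to read P1 as the condition that must be verified to justify the paper's central claim that \acp{LSP} genuinely generalise the \acp{DSP}; concretely, I would prove that every \ac{DSP} satisfies P1, i.e.\ that for an arbitrary \ac{DSP}~$G$ and every pair $(s,t) \in V^2$, the path-induced subgraph $\pathind{G}{s,t}$ is either edgeless or an $s$-$t$-\ac{DSP}. Rather than inducting over the recursive S-/P-composition of $G$ (which would force a fiddly case distinction according to whether $s$ and $t$ coincide with split or merge vertices of the decomposition), I would exploit that $\pathind{G}{s,t}$ is by construction an edge-induced subgraph of $G$ and apply the forbidden-subgraph characterisation of \Cref{th:series_parallel_forbidden_graph}. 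Thus the whole argument reduces to checking its four conditions for $H \be \pathind{G}{s,t}$.

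First I would dispose of the trivial alternative: if $G$ has no $s$-$t$-path, then no edge lies on such a path, so $H$ is edgeless and P1 holds. Otherwise I would verify the four conditions for $H$. Acyclicity is immediate, since $H$ is a subgraph of the acyclic~$G$. For the source/sink conditions I would use that, by definition, every edge of $H$ lies on some $s$-$t$-path, so every vertex incident to an edge of $H$ is reachable from $s$ and reaches $t$. Any such vertex $w \neq s$ occurs strictly after $s$ on its path and therefore has a predecessor, hence an incoming edge in $H$, so it is not a source; conversely $s$ has no incoming edge in $H$, because an edge $xs$ lying on a simple $s$-$t$-path would revisit $s$ and create a cycle, contradicting acyclicity. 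This makes $s$ the unique source, and the symmetric argument makes $t$ the unique sink.

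The remaining and central step is the forbidden-subgraph condition: $H$ must contain no subgraph homeomorphic to~$W$. Here the edge-induced nature of $H$ pays off directly: any subdivision of $W$ inside $H$ would also be a subgraph of $G$ (as $H \subseteq G$), yet $G$ is a \ac{DSP} and hence contains no such subdivision by \Cref{th:series_parallel_forbidden_graph}; therefore neither does $H$. Having established that $H$ is acyclic with unique source~$s$, unique sink~$t$, and no $W$-subdivision, \Cref{th:series_parallel_forbidden_graph} certifies that $H$ is an $s$-$t$-\ac{DSP}, which is exactly P1.

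I expect the genuine obstacle to be the source/sink verification rather than the forbidden-subgraph step: one must argue carefully that restricting to edges on $s$-$t$-paths introduces no \emph{spurious} sources or sinks, and in particular that $s$ and $t$ cannot be identified with internal vertices of $H$. The acyclicity of~$G$ is precisely what rules this out, which is why it is worth disposing of acyclicity before treating the source and sink. The forbidden-subgraph step, by contrast, is essentially free once one observes that homeomorphic copies are inherited by subgraphs.
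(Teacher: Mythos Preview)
Your reading is right: P1 is not a theorem but one of the two defining conditions of an \ac{LSP}, so there is nothing to prove about P1 in isolation. What the paper does prove is that every \ac{DSP} satisfies P1 (as part of \Cref{th:dsp_lsp}), and that is exactly what you set out to do. Your argument is correct.

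The paper takes a different route to the same conclusion. Rather than directly verifying the four hypotheses of \Cref{th:series_parallel_forbidden_graph} for $H=\pathind{G}{s,t}$, it first establishes the general characterisation \Cref{th:psp_forbidden_graph}: an arbitrary directed graph satisfies P1 if and only if it contains no $W$-subdivision. The claim ``\acp{DSP} satisfy P1'' then follows in one line by combining this with \Cref{th:series_parallel_forbidden_graph}. Your argument is essentially the acyclic special case of the paper's proof of \Cref{th:psp_forbidden_graph}: you get acyclicity of $H$ for free from acyclicity of $G$, whereas the paper must handle the possibility that $\pathind{G}{s,t}$ contains a cycle and then extract a $W$-subdivision from that cycle (the substantive part of their proof). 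What your direct approach buys is simplicity for the specific goal; what the paper's detour buys is a reusable forbidden-subgraph characterisation of P1-graphs that applies beyond \acp{DSP}. Incidentally, your careful verification that $s$ and $t$ are the unique source and sink of $H$ is more explicit than the paper's, which simply asserts this fact in passing.
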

    \begin{prop}\label{th:lsp_laminar}
        $\{E(\pathind{G}{e})\}_{e \in E}$ form a laminar set family, i.e.,
        for all edges $e_1,e_2 \in E$ we~have
        \begin{equation*}
            \pathind{G}{e_1} \subseteq \pathind{G}{e_2}~\lor~\pathind{G}{e_2} \subseteq \pathind{G}{e_1}~\lor~E(\pathind{G}{e_1}) \intersect E(\pathind{G}{e_2}) = \emptyset.
        \end{equation*}
    \end{prop}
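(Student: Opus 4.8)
Read as a statement to be proved rather than a definitional requirement, the laminarity property (\cref{th:lsp_laminar}) is most naturally established for the base class that \acp{LSP} generalize: I would show that in every \ac{DSP}~$G$ the edge-anchored subgraphs $\{E(\pathind{G}{e})\}_{e \in E}$ form a laminar family, which together with the immediate \cref{th:lsp_psp} certifies that every \ac{DSP} is an \ac{LSP}. The plan is to work with the canonical \emph{decomposition tree}~$T$ of~$G$ guaranteed by the theory behind \Cref{th:series_parallel_forbidden_graph}: its leaves are the edges of~$G$, its internal nodes alternate between S- and P-compositions, and each node~$\nu$ corresponds to a sub-\ac{DSP}~$G_\nu$ with a unique terminal pair~$(s_\nu,t_\nu)$. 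Since the leaf sets of two subtrees of~$T$ are either nested or disjoint, the family $\{E(G_\nu)\}_{\nu \in T}$ is laminar \emph{for free}; the whole argument therefore reduces to locating each $\pathind{G}{e}$ inside this family.

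The key step is the claim that for every edge $e=uv$ we have $\pathind{G}{e} = \pathind{G}{u,v} = G_{\nu(e)}$ for a single tree node~$\nu(e)$, which I would prove by inspecting the parent of the leaf~$e$ in~$T$. If that parent is an S-node, then $u$ and $v$ are consecutive series junctions, the only $u$-$v$-path is~$e$ itself, so $\pathind{G}{u,v}=\{e\}$ is exactly the leaf node. If the parent is a P-node~$\mu$, then the single-edge branch~$e$ forces $\mu$'s terminals to be exactly $(u,v)$; every edge of~$G_\mu$ lies on some $u$-$v$-path through its branch (as every edge of an $s$-$t$-\ac{DSP} lies on an $s$-$t$-path), and conversely no $u$-$v$-path can use an edge outside~$G_\mu$, whence $\pathind{G}{u,v}=G_\mu$. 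In both cases $\pathind{G}{e}$ is a member of the laminar family $\{E(G_\nu)\}_{\nu \in T}$, so laminarity of $\{E(\pathind{G}{e})\}_{e \in E}$ follows immediately.

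I expect the main obstacle to be the \enquote{no escape} direction of the P-node case, i.e.\ arguing that every $u$-$v$-path stays inside~$G_\mu$. This rests on the structural fact that a sub-\ac{DSP} is attached to the remainder of a \ac{DSP} only at its two terminals, so that a simple path leaving~$G_\mu$ through~$u$ or~$v$ could only return by revisiting a terminal---impossible in an acyclic graph. I would isolate this attachment property as a small lemma derived from \Cref{th:series_parallel_forbidden_graph}: a $u$-$v$-path escaping and re-entering~$G_\mu$, combined with a branch internal to~$G_\mu$, would exhibit a subdivision of~$W$ and thus contradict $G$ being a \ac{DSP}. Feeding this lemma into the case analysis above then closes the proof.
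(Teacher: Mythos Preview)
Your reading is right: \cref{th:lsp_laminar} is a definitional clause, and the natural thing to prove is that every \ac{DSP} satisfies it, which is exactly the content of \Cref{th:dsp_lsp}. Your argument is correct, but it takes a different route from the paper's.

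The paper proves \cref{th:lsp_laminar} for \acp{DSP} by a direct bottom-up induction on the composition: it checks that neither an S-composition nor a P-composition without the edge~$st$ creates any new path between the endpoints of an existing edge, so all sets $E(\pathind{G}{e})$ stay unchanged and laminarity is inherited from the two components; when $st\in E(G)$ the only new set is $E(\pathind{G}{s,t})=E(G)$, which contains everything and hence preserves laminarity. No explicit identification of $\pathind{G}{e}$ with a tree node is needed, and the \enquote{no escape} issue never arises because the induction hypothesis already fixes $\pathind{G}{e}$ before the composition.

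Your approach instead pins down each $\pathind{G}{e}$ as exactly one node~$G_{\nu(e)}$ of a normalized decomposition tree and then reads off laminarity from the tree. This is a genuinely different and in some sense more informative argument: you obtain an explicit description of every \ac{EAS}, which the paper's induction does not. The cost is the \enquote{no escape} lemma. Your plan to derive it from the forbidden-$W$ characterization works, but it is heavier than necessary; the cleaner justification is the standard separator property of SP decomposition nodes (the sub-\ac{DSP} $G_\mu$ meets the rest of~$G$ only in its two terminals), combined with acyclicity and simplicity of paths, which forces any $u$-$v$-path to lie entirely inside or entirely outside~$G_\mu$, and the series chain above~$\mu$ rules out the outside option. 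Either way the argument closes; the paper's induction simply avoids having to isolate this step.
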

\end{definition}

\Cref{fig:lsp_examples} shows some example \acp{LSP}.
\acp{LSP} not only include the graphs whose biconnected components are all
\acp{DSP}, but also some cyclic graphs, e.g., \emph{cyclic \acp{DSP}}
constructed by identifying the source and sink of a \ac{DSP}.
Moreover, there exist very dense \acp{LSP}, e.g.,
the natural orientations of complete bipartite graphs.

We present some interesting properties of \acp{LSP}.
Graphs that only need to satisfy property \Cref{th:lsp_psp} or \Cref{th:lsp_laminar} may be called \Cref{th:lsp_psp}- or \Cref{th:lsp_laminar}-graphs, respectively.
First, we show that \Cref{th:lsp_psp}-graphs can be characterized as those graphs that do not contain a \rdel{$W$-subdivision}\rnew{subdivision of the digraph~$W$}:

\begin{theorem}\label{th:psp_forbidden_graph}
    A directed graph $G=(V,E)$ satisfies \Cref{th:lsp_psp} if and only
    if $G$ does not contain a subgraph \rdel{homeomorphic to}\rnew{that is a subdivision of the digraph}~$W$ (displayed in
    \rdel{\Cref{fig:dsp_forbidden}}\rnew{\Cref{fig:graph_w}}).
\end{theorem}

\newcommand{\pathset}{\mathcal{P}}
\newcommand{\vpout}[1]{v^\lout(#1)}
\newcommand{\vpin}[1]{v^\lin(#1)}
\newcommand{\vout}{v^\lout}
\newcommand{\vin}{v^\lin}
\newcommand{\Vout}{V^\lout}
\newcommand{\Vin}{V^\lin}
\newcommand{\Oout}{O^\lout}
\newcommand{\Oin}{O^\lin}
\newcommand{\Cout}{C_b}
\newcommand{\Cin}{C_a}
\newcommand{\pref}[1]{\ensuremath{\mathit{pre}(#1)}}
\newcommand{\postf}[1]{\ensuremath{\mathit{post}(#1)}}
\newcommand{\comint}[2]{\ensuremath{#1\pitchfork#2}}
\begin{proof}
    Assume $G$ contains a \rdel{subgraph homeomorphic to}\rnew{a subdivision of}~$W$\rnew{ as a subgraph}. Let $H$ be the smallest
    such subgraph, and let $s$ and~$t$ be its source and sink, respectively. Note
    that $H \subseteq \pathind{G}{s,t}$.
    Hence, $\pathind{G}{s,t}$ cannot be a \ac{DSP}, and $G$ is not a
    \Cref{th:lsp_psp}-graph.

\begin{figure}[pt]
\begin{center}
\begin{tikzpicture}[yscale=0.825]
        \begin{scope}[every node/.style={dot}, every label/.style={above,draw,rectangle}]
            \node (1-s) at (2,15){};
            \node (1-c1) at (4,14){};
            \node (1-c2) at (4,15){};
            \node (1-c3) at (5,16){};
            \node (1-c4) at (7,16){};
            \node (1-c5) at (8,15){};
            \node (1-c6) at (8,14){};
            \node (1-c7) at (7,14){};
            \node (1-t) at (10,15){};
        \end{scope}
        \node[label, above left=1.5cm and 1.8cm of 1-s] (1-lcase) {\rlap{\textbf{Case 1} (note that~$y$ may lie anywhere after~$z_2$, but before~$x$, on~$C$):}};
        \node[label, below=0mm of 1-s] (1-lx)  {$s$};
        \node[label, below=0mm of 1-t] (1-ly) {$t$};
        \node[label, below left=1mm of 1-c1] (1-l1)  {$x$};
        \node[label, below right=-1mm and 1mm of 1-c2] (1-l2) {$\vpin{P}$};
        \node[label, above left=-0.3mm and 0.7mm of 1-c3] (1-l3)  {$z_1$};
        \node[label, above right=-0.3mm and 0.7mm of 1-c4] (1-l4)  {$z_2$};
        \node[label, below left=-1mm and 1mm of 1-c5] (1-l5) {$\vpout{P}$};
        \node[label, below right=1mm of 1-c6] (1-l6) {$y$};
        \node[label, below=1mm of 1-c7] (1-l7) {$\vout_\ell$};

        \begin{scope}[every edge/.style={draw=black,->,path}]
            \path (1-s) edge[] (1-c2);
            \path (1-c5) edge[] (1-t);
            \path (1-c1) edge[myred] (1-c2);
            \path (1-c2) edge[myred,dashed] (1-c3);
            \path (1-c3) edge[myred] (1-c4);
            \path (1-c4) edge[myred,dashed] (1-c5);
            \path (1-c5) edge[myred] (1-c6);
            \path (1-c6) edge[dashed] (1-c7);
            \path (1-c7) edge[] (1-c1);
            \path[->,path,draw=myred,bend left=90] (1-c3) .. controls (7.5,17.5) and (10,16) .. (1-c6);
            \path[->,path,draw=myred,bend left=90] (1-c1) .. controls (2,16) and (4.5,17.5) .. (1-c4);
        \end{scope}

        \begin{scope}[on background layer, every edge/.style={cycle-shade}]
            \path (1-c1.center) edge (1-c2.center);
            \path (1-c2.center) edge (1-c3.center);
            \path (1-c3.center) edge (1-c4.center);
            \path (1-c4.center) edge (1-c5.center);
            \path (1-c5.center) edge (1-c6.center);
            \path (1-c6.center) edge (1-c7.center);
            \path (1-c7.center) edge (1-c1.center);
        \end{scope}

        \begin{scope}[every node/.style={dot}, every label/.style={above,draw,rectangle}]
            \node (2a0-s) at (0,10){};
            \node (2a0-x) at (1,10){};
            \node (2a0-c1) at (4,10){};
            \node (2a0-c2) at (4,11){};
            \node (2a0-c3) at (8,11){};
            \node (2a0-c4) at (8,10){};
            \node (2a0-y) at (11,10){};
            \node (2a0-t) at (12,10){};
        \end{scope}
        \node[label, above=1.7cm of 2a0-s] (2a0-lcase) {\rlap{\textbf{Case 2a} --- both prefix/postfix pairs have no common internal vertices:}};
        \node[label, below=0mm of 2a0-s] (2a0-lx)  {$s$};
        \node[label, below=0mm of 2a0-x] (2a0-lx)  {$x$};
        \node[label, below=0mm of 2a0-y] (2a0-ly) {$y$};
        \node[label, below=0mm of 2a0-t] (2a0-ly) {$t$};
        \node[label, left=1mm of 2a0-c1] (2a0-l1)  {$z_1=\vpout{P_2}$};
        \node[label, above left=-1mm and 1mm of 2a0-c2] (2a0-l2) {$z_2=\vpin{P_2}$};
        \node[label, above right=-1mm and 1mm of 2a0-c3] (2a0-l3) {$\vpout{P_1}$};
        \node[label, right=1mm of 2a0-c4] (2a0-l4) {$\vpin{P_1}$};

        \begin{scope}[every edge/.style={draw=black,->,path}]
            \path (2a0-s) edge[dashed] (2a0-x);
            \path (2a0-x) edge[myred] (2a0-c2);
            \path (2a0-x) edge[myred,bend right=35] (2a0-c4);
            \path (2a0-c1) edge[myred] (2a0-c2);
            \path (2a0-c2) edge[myred] (2a0-c3);
            \path (2a0-c3) edge[] (2a0-c4);
            \path (2a0-c4) edge[myred] (2a0-c1);
            \path (2a0-c1) edge[myred,bend right=35] (2a0-y);
            \path (2a0-c3) edge[myred] (2a0-y);
            \path (2a0-y) edge[dashed] (2a0-t);
        \end{scope}

        \begin{scope}[on background layer, every edge/.style={cycle-shade}]
            \path (2a0-c1.center) edge (2a0-c2.center);
            \path (2a0-c2.center) edge[] (2a0-c3.center);
            \path (2a0-c3.center) edge (2a0-c4.center);
            \path (2a0-c4.center) edge[] (2a0-c1.center);
        \end{scope}

        \begin{scope}[every node/.style={dot}, every label/.style={above,draw,rectangle}]
            \node (2a2-s) at (0,4.7){};
            \node (2a2-x) at (1,4.7){};
            \node (2a2-z) at (6,3.7){};
            \node (2a2-z2) at (6,6.7){};
            \node (2a2-c1) at (4,4.7){};
            \node (2a2-c2) at (4,5.7){};
            \node (2a2-c3) at (8,5.7){};
            \node (2a2-c4) at (8,4.7){};
            \node (2a2-y) at (11,4.7){};
            \node (2a2-t) at (12,4.7){};
        \end{scope}
        \node[label, above=2.2cm of 2a2-s] (2a2-lcase) {\rlap{\textbf{Case 2a} --- both prefix/postfix pairs have common internal vertices:}};
        \node[label, below=0mm of 2a2-s] (2a2-lx)  {$s$};
        \node[label, below=0mm of 2a2-x] (2a2-lx)  {$x$};
        \node[label, below=0mm of 2a2-z] (2a2-lz)  {$z_1$};
        \node[label, above=0mm of 2a2-z2] (2a2-lz2)  {$z_2$};
        \node[label, below=0mm of 2a2-y] (2a2-ly) {$y$};
        \node[label, below=0mm of 2a2-t] (2a2-ly) {$t$};
        \node[label, left=1mm of 2a2-c1] (2a2-l1)  {$\vpout{P_2}$};
        \node[label, left=1mm of 2a2-c2] (2a2-l2) {$\vpin{P_2}$};
        \node[label, right=1mm of 2a2-c3] (2a2-l3) {$\vpout{P_1}$};
        \node[label, right=1mm of 2a2-c4] (2a2-l4) {$\vpin{P_1}$};

        \begin{scope}[every edge/.style={draw=black,->,path}]
            \path (2a2-s) edge[dashed] (2a2-x);
            \path (2a2-x) edge[myred,bend left=30] (2a2-z2);
            \path (2a2-x) edge[myred,bend right=20] (2a2-z);
            \path (2a2-z) edge[myred,bend right=6] (2a2-c4);
            \path (2a2-c1) edge[myred] (2a2-c2);
            \path (2a2-c2) edge[myred] (2a2-c3);
            \path (2a2-z2) edge[bend right=6] (2a2-c2);
            \path (2a2-c3) edge[myred,bend right=6] (2a2-z2);
            \path (2a2-c3) edge[] (2a2-c4);
            \path (2a2-c4) edge[myred] (2a2-c1);
            \path (2a2-c1) edge[bend right=6] (2a2-z);
            \path (2a2-z) edge[myred,bend right=20] (2a2-y);
            \path (2a2-z2) edge[myred,bend left=30] (2a2-y);
            \path (2a2-y) edge[dashed] (2a2-t);
        \end{scope}

        \begin{scope}[on background layer, every edge/.style={cycle-shade}]
            \path (2a2-c1.center) edge (2a2-c2.center);
            \path (2a2-c2.center) edge[] (2a2-c3.center);
            \path (2a2-c3.center) edge (2a2-c4.center);
            \path (2a2-c4.center) edge[] (2a2-c1.center);
        \end{scope}

        \begin{scope}[every node/.style={dot}, every label/.style={above,draw,rectangle}]
            \node (2b-s) at (0,0){};
            \node (2b-x) at (1,0){};
            \node (2b-z) at (6,-1){};
            \node (2b-z2) at (6,2){};
            \node (2b-c1) at (6,0){};
            \node (2b-c2) at (6,1){};
            \node (2b-y) at (11,0){};
            \node (2b-t) at (12,0){};
        \end{scope}
        \node[label, above=2.1cm of 2b-s] (2b-lcase) {\rlap{\textbf{Case 2b}:}};
        \node[label, below=0mm of 2b-s] (2b-lx)  {$s$};
        \node[label, below=0mm of 2b-x] (2b-lx)  {$x$};
        \node[label, below=0mm of 2b-z] (2b-lz)  {$z_1$};
        \node[label, above=0mm of 2b-z2] (2b-lz2)  {$z_2$};
        \node[label, below=0mm of 2b-y] (2b-ly) {$y$};
        \node[label, below=0mm of 2b-t] (2b-ly) {$t$};
        \node[label] (2b-l1) at (7.55,1.55) {$\vpin{P_2}=\vpout{P_1}$};
        \node[label] (2b-l2) at (4.45,-0.55) {$\vpin{P_1}=\vpout{P_2}$};

        \begin{scope}[every edge/.style={draw=black,->,path}]
            \path (2b-s) edge[dashed] (2b-x);
            \path (2b-x) edge[myred,bend left=30] (2b-z2);
            \path (2b-x) edge[myred,bend right=30] (2b-z);
            \path (2b-z) edge[myred,dashed,bend right=30] (2b-c1);
            \path (2b-z2) edge[dashed,bend right=30] (2b-c2);
            \path (2b-c2) edge[myred,dashed,bend right=30] (2b-z2);
            \path[draw=myred, path, ->, bend left=30] (2b-c1) -- (4,0) -- (4,1) -- (2b-c2);
            \path[draw=black, path, ->, bend left=30] (2b-c2) -- (8,1) -- (8,0) -- (2b-c1);
            \path (2b-c1) edge[dashed,bend right=30] (2b-z);
            \path (2b-z) edge[myred,bend right=30] (2b-y);
            \path (2b-z2) edge[myred,bend left=30] (2b-y);
            \path (2b-y) edge[dashed] (2b-t);
        \end{scope}

        \begin{scope}[on background layer, every edge/.style={cycle-shade}]
            \path[cycle-shade] (2b-c1.center) -- (4,0) -- (4,1) -- (2b-c2.center);
            \path[cycle-shade] (2b-c2.center) -- (8,1) -- (8,0) -- (2b-c1.center);
        \end{scope}
\end{tikzpicture}
\end{center}
\caption{Visualization for the proof of \Cref{th:psp_forbidden_graph}.
    $\pathind{G}{s,t}$ contains a cycle (shaded in gray).
    The wiggly lines denote paths.
    Dashed paths may have length 0, others must contain at least one edge.
    The \rdel{$W$-subdivision}\rnew{subdivision of the digraph~$W$} is given by the vertices $x,y,z_1,z_2$~(as named
    in~\Cref{fig:dsp_forbidden}), and highlighted in red.
    For readability, the figures only show cases where a prefix/postfix pair may have at most one common vertices, even though there could exist more; by the choice of $z_1$ and $z_2$, such further common vertices are irrelevant for the \rdel{$W$-subdivision}\rnew{subdivision of~$W$}.
    }
\label{fig:path_induced_cycle_k4}
\end{figure}

    For the other direction assume that $G$ is not a \Cref{th:lsp_psp}-graph. Then it contains a vertex
    pair~$(s,t)$ such that $H' \be \pathind{G}{s,t}$ is neither empty nor a
    \ac{DSP}.
    Since $H'$ contains exactly one source and one sink,
    \Cref{th:series_parallel_forbidden_graph} lets us conclude that $H'$ (and
    thus $G$) either contains a \rdel{subgraph homeomorphic to}\rnew{subdivision of the digraph}~$W$ or a (directed) cycle~$C$.
    We will argue that in the latter case, $H'$ must contain a \rdel{subgraph
    homeomorphic to~$W$, i.e.\ a~$W$-subdivision,}\rnew{subdivision of~$W$} as well.

    So consider a set of $s$-$t$-paths~$\pathset$ that is spanning~$C$.
    For a path~$P\in \pathset$, let the \emph{entrance vertex}~$\vpin{P}$ (\emph{exit vertex}~$\vpout{P}$) be the first (last, respectively) vertex along~$P$ that $P$~and~$C$ have in common. Note that not all vertices of~$P$ between~$\vpin{P}$ and~$\vpout{P}$ have to be on~$C$.
    We first argue that there exist two different entrance vertices and two
    different exit vertices on~$C$ (not necessarily four distinct vertices):
    Since no single $s$-$t$-path visits the same vertex twice, no path $P'\in\pathset$
    contains~$C$ completely.
    There must be another $s$-$t$-path~$P'' \in \pathset$ that contains
    the cycle's edge going to $\vpin{P'}$, and $P''$ thus enters~$C$ for the
    first time at a vertex~$\vpin{P''} \neq \vpin{P'}$.
    Analogously, there must be an $s$-$t$-path~$P''' \in \pathset$ that contains
    the cycle's edge going out of $\vpout{P'}$, and~$\vpout{P'''} \neq \vpout{P'}$.

    Based on the existence of these vertices, we can find a subdivision of~$W$
    in~$H'$ (and consequently in $G$) as shown in
    \Cref{fig:path_induced_cycle_k4}.
    To establish the \rdel{$W$-}subdivision, we will assign vertices in $H'$ to the
    vertices~$x,y,z_1,z_2$ of~$W$ (the naming scheme of these vertices
    corresponds to the one in \Cref{fig:dsp_forbidden}).
    Let~$\Vin$ with~$k \be |\Vin| \geq 2$ \,($\Vout$ with~$\ell \be |\Vout| \geq
    2$) be the set of all vertices that serve as an entrance vertex (exit
    vertex, respectively) for some $s$-$t$-path in $\pathset$.
    We make a case distinction regarding the clockwise cyclic order~$O$ of all
    these vertices~$\Vin \union \Vout$ around~$C$.
    Mind the fact that a vertex can serve multiple times both as an entrance
    vertex and as an exit vertex for different paths in $\pathset$; in
    particular, $\Vin \intersect \Vout$ may not be empty.

    \begin{description}
    \item[Case 1:]
    The vertices in~$\Vin$ and the vertices in~$\Vout$
    form continuous sequences~$O^\lin, O^\lout$ in~$O$ respectively, and there
    is at most one vertex that is contained in both~$\Oin$ and~$\Oout$.
    We assume w.l.o.g.\ that either $O =
    (\vin_1,\dots,\vin_k,\vout_1,\dots,\vout_\ell)$ or $O =
    (\vin_1,\dots,\vin_k=\vout_1,\dots,\vout_\ell)$:
    The edge~$e \in C$ going out of~$\vout_\ell$ must be contained
    in a path~$P\in\pathset$.
    Let~$\Cin$~($\Cout$) be the subpath of~$C$ from~$\vpin{P}$ to~$\vpout{P}$ (from~$\vpout{P}$ to~$\vpin{P}$, respectively).
    Then, $P$ enters~$\Cin$ through the entrance vertex~$\vpin{P}$, but must leave it again in order to avoid meeting~$\vpout{P}$ prematurely. %
    Since~$P$ has to reach~$e$, %
    it eventually enters~$\Cout$
    after~$\vpout{P}$, but before or at~$\vout_\ell$. %
    After traversing~$e$, $P$ must then leave~$\Cout$ to
    avoid~$\vpin{P}$, and finally enter~$\Cin$ again to reach~$\vpout{P}$.
    Note that~$P$ may repeatedly exit and enter~$C$ apart from the mentioned occasions.
    Nonetheless, we know for certain that there exists at least one subpath of~$P$ that exits~$\Cout$ and enters~$\Cin$, sharing only its first and its final vertex with~$C$.
    Among such subpaths, consider the one whose first vertex is as close to $\vpin{P}$ along $C$ as possible; call its first and last vertex $x$ and $z_2$, respectively.
    By construction, $\vpin{P}$ lies between~$x$ and~$z_2$ on~$C$, and
    thus there exists a subpath of~$P$ that leaves~$C$ at a vertex~$z_1$ after~$\vpin{P}$ (in orientation of $C$), but before~$z_2$; this subpath enters~$C$ again at a vertex~$y$ after~$z_2$, but before~$x$ (such that~$P$ can still reach~$\vpout{P}$ later).
    Overall, $P\cup C\subseteq H'$ is a supergraph of a \rdel{$W$-subdivision}\rnew{subdivision of~$W$}, as witnessed by the labeling of the vertices $x,y,z_1,z_2$ and disregarding the relative position of these vertices w.r.t.\ $\vout_\ell$.

    \item[Case 2:]
    Either the vertices in~$\Vin$ or the vertices in~$\Vout$ do not form a
    continuous sequence in~$O$ (\emph{Case~2a}), or~$|\Vin \intersect \Vout|\geq 2$ (\emph{Case~2b}).
    In \textbf{Case~2a}, there are two entrance vertices~$\vin_1,
    \vin_2$ and two exit vertices~$\vout_1, \vout_2$ such that~$O$ alternates
    between them as follows: $O =
    (\vin_1,\dots,\vout_2,\dots,\vin_2,\dots,\vout_1,\dots)$.
    For each $i\in\{1,2\}$, let~$P_i^\lin$ ($P_i^\lout$) be a path in~$\pathset$
    with~$\vin_i$ as its entrance vertex ($\vout_i$ as its exit vertex,
    respectively); we can establish an $s$-$t$-path~$P_i$ by concatenating the subpath from~$s$ to~$\vin_i$
    of~$P^\lin_i$, the subpath from~$\vout_i$ to~$\vin_i$ on~$C$, and the
    subpath of~$\vout_i$ to~$t$ of~$P^\lout_i$.
    In \textbf{Case~2b}, we select two vertices in $\Vin \intersect \Vout$ and construct the $s$-$t$-paths~$P_1, P_2$ analogously such that $\vpin{P_1}=\vpout{P_2}$ and $\vpin{P_2}=\vpout{P_1}$.

    From now on, both Cases~2a and~2b can be discussed together as Case~2. $P_1$ and~$P_2$ have the same start and end vertex and may have several more common vertices on and outside of~$C$.
    Traversing along~$P_1$, let~$x$ be the last such common vertex before~$\vpin{P_1}$, and let~$y$ be the first such common vertex after~$\vpout{P_1}$.
    We will call the subpath of an $s$-$t$-path~$P$ from~$x$ to~$\vpin{P}$
    the \emph{prefix}~\pref{P} of~$P$, and the subpath~from~$\vpout{P}$ to~$y$ the
    \emph{postfix}~\postf{P} of~$P$.
    Let $U_1$ ($U_2$) be the common internal vertices of $\pref{P_1}$ and $\postf{P_2}$ ($\pref{P_2}$ and $\postf{P_1}$, respectively).
    Then, we choose~$z_1$ to be the last vertex of $U_1\cup\{\vpout{P_2}\}$ when traversing along~\postf{P_2}.
    Analogously, let $z_2$ be the first vertex of $U_2\cup\{\vpin{P_2}\}$  when traversing along~\pref{P_2}.
    Always, $P_1\cup P_2\cup C\subseteq H'$ is now a supergraph of a \rdel{$W$-subdivision}\rnew{subdivision of~$W$}, as witnessed by the labeling of the vertices $x,y,z_1,z_2$.
    \end{description}
\end{proof}

Unlike \Cref{th:lsp_psp}-graphs, there can be no characterization of
\Cref{th:lsp_laminar}-graphs via forbidden subdivisions or (directed versions
of) minors:%

\begin{theorem}\label{th:laminar_subdivision}
    Every directed graph $G$ has a subdivision $\subdiv{G}$ that satisfies
    \Cref{th:lsp_laminar}.
\end{theorem}
\begin{proof}
    Construct $\subdiv{G}$ by subdividing every edge of~$G$ (i.e., removing every edge~$uv$ of~$G$ and adding a new subdivision vertex~$w$ as well as directed edges $uw$, $wv$ instead).
    Each edge $e \in E(\subdiv{G})$ is either the only incoming or the only outgoing edge
    at some subdivision vertex.
    Thus, $\pathind{\subdiv{G}}{e}$ contains only edge~$e$.
    As all $\pathind{\subdiv{G}}{e}$ are disjoint,
    $\subdiv{G}$ satisfies \Cref{th:lsp_laminar}.
\end{proof}

Nevertheless, \rnew{based on \Cref{th:psp_forbidden_graph} }it is easy to see:

\begin{theorem}\label{th:dsp_lsp}
    Every \ac{DSP} $G$ is an \ac{LSP}.
\end{theorem}
\begin{proof}
    $G$ satisfies \Cref{th:lsp_psp} by \Cref{th:psp_forbidden_graph}.
    For \Cref{th:lsp_laminar}, we use induction over the directed
    series-parallel composition of $G$.
    A graph with a single edge $e$ satisfies \Cref{th:lsp_laminar} since there
    is no other \ac{EAS} that $\pathind{G}{e}$ could intersect with.
    Now consider the creation of a new $s$-$t$-\ac{DSP} $G$ from two \acp{DSP}
    $G_1$ and $G_2$, for which \Cref{th:lsp_laminar} holds.
    If $st \notin E(G)$, the composition does not create any new paths between
    the endpoints of any edge, so $\pathind{G}{e} = \pathind{G_i}{e}$ for every
    $e \in E(G_i)$, $i \in \{1,2\}$;
    we already know that $\{E(\pathind{G_i}{e})\}_{e \in E(G_i)}$ for $i \in
    \{1,2\}$ form laminar set families, respectively, so
    $\{E(\pathind{G}{e})\}_{e \in E(G_1) \union E(G_2)}$ does so as well.
    Finally, assume $st \in E(G)$.
    We only have to show that the laminar set family $\{E(\pathind{G}{e})\}_{e
    \in E(G) \setminus \{st\}}$ remains laminar when $E(\pathind{G}{s,t})$ is
    added to it.
    This is the case since \rdel{$\pathind{G}{s,t}$ fully contains $G$}\rnew{$E(\pathind{G}{s,t}) = E(G)$ is a superset of all other possible subsets of~$E(G)$; it thus serves as the top-most set in the laminar family}.
\end{proof}

\section{Efficient Algorithms}
\label{sec:algorithms}
We start with presenting two algorithms that operate on \acp{DSP}
(\Cref{sec:algorithm_dsp}): the first one finds an optimal \prob{MCPS1} solution
in linear time, and the second one an optimal \prob{MCPS} solution in
cubic time. Afterwards, we describe how to lift these results to \acp{LSP}
and give an additional concrete quadratic-time algorithm for \prob{MCPS1} on
\acp{LSP}~(\Cref{sec:algorithm_lsp}). We demonstrate other algorithmic corollaries in \Cref{sec:algorithm_other}.

\subsection{\texorpdfstring{MCPS$_1$}{MCPS1} and MCPS on Directed Series-Parallel Graphs}
\label{sec:algorithm_dsp}

Our algorithms exploit a useful property of capacities in
\Cref{th:lsp_psp}-graphs:
if every edge is covered, then \emph{all} vertex pairs are covered.

\begin{theorem}\label{th:series_parallel_edge_feasible_cap_spanner}
    Given a retention ratio $\stretcha \in (0,1)$, let $G=(V,E)$ be a
    \Cref{th:lsp_psp}-graph with edge capacities~\ecap, and
    $G'=(V,E')$, $E' \subseteq E$, a subgraph with $\maxflow{G'}(u,v) \geq \stretcha \cdot
    \maxflow{G}(u,v)$ for all edges $uv \in E$.
    Then, $E'$ is a feasible $\stretcha$-\prob{MCPS} solution, i.e.,
    $\maxflow{G'}(u,v) \geq \stretcha \cdot \maxflow{G}(u,v)$ for all vertex pairs~$(u,v) \in V^2$.
\end{theorem}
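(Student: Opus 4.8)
Fix a pair $(u,v)\in V^2$; we must show $\maxflow{G'}(u,v)\ge\stretcha\cdot\maxflow{G}(u,v)$. If $\pathind{G}{u,v}$ contains no edges then $\maxflow{G}(u,v)=0$ and we are done, so assume $H\be\pathind{G}{u,v}$ is a $u$-$v$-\ac{DSP} by \Cref{th:lsp_psp}. Every $u$-$v$-path of $G$ (resp.\ of $G'$) uses only edges of $H$ (resp.\ of $H\intersect G'$), and a maximum flow decomposes into such paths, so $\maxflow{G}(u,v)=\maxflow{H}(u,v)$ and $\maxflow{G'}(u,v)=\maxflow{H\intersect G'}(u,v)$; it thus suffices to prove $\maxflow{H\intersect G'}(u,v)\ge\stretcha\cdot\maxflow{H}(u,v)$. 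I would exploit the recursive structure of the \ac{DSP} $H$: the terminal capacity of a single edge equals its capacity; of an S-composition it is the minimum of the two partial terminal capacities (every terminal-to-terminal path crosses the gluing vertex); of a P-composition it is their sum (such a path lies wholly inside one part). Crucially, all three identities still hold after deleting edges---a gluing vertex still separates the terminals in any subgraph, and the two parts of a P-composition still meet only in the terminals---so the same min/sum expression over the decomposition tree of $H$ computes both $\maxflow{H}(u,v)$ and $\maxflow{H\intersect G'}(u,v)$, differing only at the leaves.

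Hence, if each edge $e$ of $H$ were preserved \emph{inside $H$}, i.e.\ $\maxflow{H\intersect G'}(e)\ge\stretcha\cdot\maxflow{H}(e)$, the theorem would follow by an induction over the decomposition tree: at a leaf this inequality is exactly the claim, and the S- and P-steps inherit it since $\min$ and $+$ are monotone. The difficulty---which I expect to be the main obstacle---is that the hypothesis only yields $\maxflow{G'}(e)\ge\stretcha\cdot\maxflow{G}(e)$: the $a$-$b$-flow witnessing that an edge $e=ab$ is covered may be routed through $a$-$b$-paths of $G$ that leave $H$, so ``covered in $G$'' need not imply ``covered inside $H$''. In fact the statement is genuinely non-local: in \Cref{fig:dsp_forbidden}(right) all edges are covered for $\stretcha=\frac{1}{2}$ while the terminal pair is not---precisely because that graph contains a subdivision of $W$ and hence is not a \Cref{th:lsp_psp}-graph.

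To bridge this gap I would add an outer induction on $|E|$. If $G'=G$ there is nothing to prove; otherwise fix some edge $ab\in E\setminus E'$. Then $G-ab$ is again a \Cref{th:lsp_psp}-graph (absence of a $W$-subdivision is inherited by subgraphs, see \Cref{th:psp_forbidden_graph}), $G'\subseteq G-ab$, and $G'$ still covers every edge of $G-ab$ because $\maxflow{G'}(c,d)\ge\stretcha\cdot\maxflow{G}(c,d)\ge\stretcha\cdot\maxflow{G-ab}(c,d)$. By the induction hypothesis, $\maxflow{G'}(s,t)\ge\stretcha\cdot\maxflow{G-ab}(s,t)$ for all $(s,t)$, which already settles every pair with $\maxflow{G-ab}(s,t)=\maxflow{G}(s,t)$. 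The remaining pairs $(s,t)$ are those for which $ab$ is \emph{critical}, i.e.\ $\maxflow{G-ab}(s,t)=\maxflow{G}(s,t)-1$; equivalently $ab$ lies in a minimum $s$-$t$-cut of $\pathind{G}{s,t}$, hence on a path of every maximum $s$-$t$-flow there. If $(s,t)$ is itself an edge we are done by hypothesis, so we may assume $(s,t)$ is a non-edge and $\pathind{G}{s,t}$ is a \ac{DSP} with at least two edges through which the $s$-$t$-flow is forced across the absent edge $ab$.

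For such a critical non-edge pair one must recover the single lost unit of $s$-$t$-connectivity in $G'$, and this is the technical heart of the argument and the point at which \Cref{th:psp_forbidden_graph} becomes essential. Since $ab$ is absent from $G'$ yet covered, its coverage is witnessed by an $a$-$b$-path $P\ne ab$ in $G'$. Splicing $P$ between the $s$-to-$a$ and $b$-to-$t$ portions of an $s$-$t$-path of $\pathind{G}{s,t}$ through $ab$ gives an $s$-$t$-walk in $G$; if $P$ leaves $\pathind{G}{s,t}$ this walk cannot be simple (else $P$ would lie in $\pathind{G}{s,t}$), and tracing $P$ to its first re-entry point---together with the direct edge $ab$, the bridging return segment, and the parallel terminal routes already present in $\pathind{G}{s,t}$---ought to exhibit a subgraph homeomorphic to $W$, contradicting that $G$ is a \Cref{th:lsp_psp}-graph. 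Hence the compensating $a$-$b$-flow can be kept inside $\pathind{G}{s,t}$, where the min/sum accounting of the \ac{DSP} (and, where needed, a recursive application of the whole argument) supplies the missing unit. Making this $W$-subdivision step precise---across the handful of positions at which $P$ can re-enter, and ruling out degenerate collapses---is the delicate part; the surrounding series-parallel bookkeeping is routine.
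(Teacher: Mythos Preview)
Your proposal diverges substantially from the paper's argument and leaves its own central step unfinished. The paper proceeds by a single structural induction on the series-parallel decomposition of $H=\pathind{G}{u,v}$: after observing that both $\maxflow{G}(u,v)$ and $\maxflow{G'}(u,v)$ are realized inside $H$, it recurses on the two components $H_1,H_2$, using $\maxflow{H'}(u,v)=\min\{\maxflow{H_1'}(u_1,v_1),\maxflow{H_2'}(u_2,v_2)\}$ for S-compositions and the corresponding sum for P-compositions, and bottoms out when the current terminal pair is an edge of $G$, where the hypothesis applies. There is no outer induction on $|E|$, no deletion of a missing edge, and no explicit $W$-subdivision construction.

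The difficulty you isolate---that ``covered in $G$'' need not mean ``covered inside the sub-\ac{DSP}'', because the $a$-$b$-flow witnessing coverage of an edge $ab$ might route through $a$-$b$-paths outside $H$---is a genuine subtlety, and you are right that the paper's write-up is terse about it. But your remedy is both much heavier than anything the paper does and, as you yourself flag, incomplete. The step you call ``the technical heart''---that a compensating $a$-$b$-path $P$ leaving $\pathind{G}{s,t}$ must force a subgraph homeomorphic to $W$---is only asserted (``ought to exhibit''): you give no case analysis on where $P$ exits and re-enters, no construction of the five required internally-disjoint paths, and no treatment of the degenerate collapses you allude to. On top of that, the phrase ``where needed, a recursive application of the whole argument'' to recover the missing unit of flow leaves both the invariant and the termination of that recursion unspecified. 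So the proposal does not, as written, constitute a proof: the very step you identify as decisive is left open, whereas the paper sidesteps the whole detour by working entirely within the decomposition of $H$.
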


\begin{proof}
    Consider any vertex pair $(u,v) \in V^2$ with $u \neq v$ (as vertex pairs
    with $u = v$ are trivially covered).
    We can assume w.l.o.g.\ that a maximum flow from $u$ to $v$ passes only over
    edges of $H \be \pathind{G}{u,v}$:
    for any maximum flow from $u$ to $v$ that uses cycles outside of $H$,
    we can find one of the same value in $H$ by simply removing these cycles.
    As $G$ is a \Cref{th:lsp_psp}-graph, $H$ is a $u$-$v$-\ac{DSP}.
    We use induction over the series-parallel composition of~$H$ to prove
    that $(u,v)$ is covered.
    If $uv \in E$, the edge is already covered as asserted in the prerequisites
    of the theorem; this includes the base case of $H$ containing a single edge.

    Let $H$ be created from the disjoint union of two smaller
    $u_i$-$v_i$-\acp{DSP} $H_i$, $i \in \{1,2\}$, for which the theorem holds.
    Further, let $X' \be (V(X), E(X) \intersect E(G'))$ for any subgraph~$X \in
    \{H,H_1,H_2\}$ of~$G$.
    If $H$ is constructed from an S-composition---i.e.\ $u = u_1$, $v = v_2$, and $v_1 = u_2$---each maximum $u$-$v$-flow~in~$H$ ($H'$) passes through $H_1$ and $H_2$ ($H'_1$ and $H'_2$, resp.):
    \begin{equation*}
    \maxflow{H'}(u,v)
    = \min\{\maxflow{H_1'}(u,v_1),\maxflow{H_2'}(v_1,v)\}
    \geq \min\{\stretcha \cdot \maxflow{H_1}(u,v_1),\stretcha \cdot \maxflow{H_2}(v_1,v)\}
    = \stretcha \cdot \maxflow{H}(u,v).
    \end{equation*}
    If $H$ is constructed from a P-composition---i.e.\ $u = u_1 = u_2$
    and $v = v_1 = v_2$---its $u$-$v$-capacity in~$H$~($H'$) is the sum of
    $u$-$v$-capacities in~$H_1$~and~$H_2$ ($H'_1$ and $H'_2$, resp.):
    \begin{equation*}
    \maxflow{H'}(u,v)
    = \maxflow{H_1'}(u,v) + \maxflow{H_2'}(u,v)
    \geq \stretcha \cdot \maxflow{H_1}(u,v) + \stretcha \cdot \maxflow{H_2}(u,v)
    = \stretcha \cdot \maxflow{H}(u,v).
    \qedhere\popQED
    \end{equation*}
\end{proof}

\begin{remark}\label{th:obs_cps_spshadows}
    \Cref{th:series_parallel_edge_feasible_cap_spanner} does in general not
    apply to graphs for which only their shadow is
    series-parallel.
    In particular, it does not even hold for the graph~$W$---the smallest graph
    without property~\Cref{th:lsp_psp}---when two paths of length 2 are added to
    it, see \rdel{\Cref{fig:dsp_forbidden}}\rnew{\Cref{fig:graph_w_modification}}.
\end{remark}

We start with giving a simple linear-time algorithm to solve \prob{MCPS1}, the problem where we assume uniform edge capacities, in \acp{DSP}.
The algorithm requires the series-parallel decomposition tree to be \emph{clean}, i.e.,
if there are multiple P-compositions of several $s$-$t$-\acp{DSP} $H_0,\dots,H_k$
where $E(H_0) =\{st\}$ is a single edge, we first compose $H_1,\dots,H_k$ into a common
$s$-$t$-\ac{DSP} $H$ before composing $H$ with $H_0$.
Standard decomposition algorithms can easily achieve this property; the proof
below also describes an independent linear-time method to transform a non-clean
decomposition tree into a clean one.

\begin{algorithm}[bt]
    \caption{Solve \prob{MCPS1} on \acp{DSP}.}
    \label{alg:opt_cps_sp_graphs}
    \begin{algorithmic}[1]
        \Input{\ac{DSP} $G = (V,E)$, retention ratio $\stretcha \in (0,1)$.}
            \Let{$E'$}{$E$}
            \Let{$T$}{clean series-parallel decomposition tree for $G$}
            \ForAll{tree node $\sigma$ in a bottom-up traversal of $T$}
                \Let{($H$, $(s,t)$)}{graph and terminal pair corresponding to $\sigma$}
                \If{$\sigma$ is a P-composition \textbf{and} $st\!\in\!E(H)$
                \textbf{and} $st$ is covered by $(E'\intersect E(H))\!\setminus\!\{st\}$}
                    \State remove $st$ from $E'$
                \EndIf
            \EndFor
            \Return{$E'$}
    \end{algorithmic}
\end{algorithm}

\begin{theorem}\label{th:alg_cps_dsps}
    \Cref{alg:opt_cps_sp_graphs} solves \prob{MCPS1} on~\acp{DSP} optimally in
    $\bigO(|V|)$~time.
\end{theorem}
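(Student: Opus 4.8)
The plan is to argue correctness and running time separately. For correctness, I would show two things: (i) the output $E'$ is feasible, and (ii) no feasible solution can be smaller. For feasibility, observe that the algorithm only ever removes an edge $st$ at a P-composition node when $(s,t)$ is still covered by the remaining edges inside $H$. Since later compositions only add edges, coverage of $(s,t)$ is preserved globally; moreover capacities of other vertex pairs only involve edges inside the relevant edge-anchored subgraphs, so removing $st$ never drops any vertex pair below the threshold once all edges are handled. Crucially, by \Cref{th:series_parallel_edge_feasible_cap_spanner}, it suffices to check that \emph{every edge} $uv \in E$ remains covered by the final $E'$ — and the cleanness of the decomposition tree guarantees that each original edge $st$ is examined exactly once, at the unique P-composition node (or never, if it is never in parallel with anything), so the final $E'$ covers every edge of $G$.

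For optimality, the key structural claim is a local exchange/decomposition argument: because the edge-anchored subgraphs are $s$-$t$-DSPs and hence nest along the decomposition tree, an optimal solution decomposes into independent subproblems, one per P-composition node where a single edge $st$ is in parallel with a larger graph $H$. At such a node, the only ``free'' choice is whether to keep the edge $st$; keeping it is necessary if and only if the parallel part $E(H)\setminus\{st\}$ restricted to the solution fails to cover $(s,t)$. I would prove, by induction on the decomposition tree bottom-up, the invariant that after processing node $\sigma$, the current $E'$ restricted to $E(H_\sigma)$ is a minimum-size subset of $E(H_\sigma)$ covering all edges of $H_\sigma$ (equivalently, by \Cref{th:series_parallel_edge_feasible_cap_spanner}, all vertex pairs within $H_\sigma$). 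The induction step needs that the greedy decision is safe: since a minimum $\stretcha$-MCPS solution for $H_\sigma$ can be taken to be the union of minimum solutions for the children plus possibly the edge $st$, and whether $st$ is needed depends only on what the children's solutions already provide for $(s,t)$, the locally greedy removal is globally optimal. The bottom-up order ensures the children's solutions are already minimized when $\sigma$ is processed.

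For the running time, a clean series-parallel decomposition tree has $\bigO(|E|) = \bigO(|V|)$ nodes and can be built in linear time by standard algorithms (or by the linear-time cleaning transformation the theorem statement promises to describe). The only nontrivial per-node cost is the coverage test: checking whether $(s,t)$ is covered by $(E'\cap E(H))\setminus\{st\}$. Naively this is a max-flow computation, which is too slow; the point will be that at a P-composition node one only needs the $s$-$t$-capacity of the parallel subgraph, and this value can be maintained incrementally — each S-composition takes a minimum of two already-computed child capacities, each P-composition takes a sum — so the capacity of every subgraph in $T$, both in $G$ and in the evolving $G'$, is available in $\bigO(1)$ at its node. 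Summing over all nodes gives $\bigO(|V|)$ total.

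The main obstacle I anticipate is the optimality argument, specifically justifying that an optimal solution really does decompose cleanly along the tree so that the greedy choice at each P-composition node is independent of all others. This relies on laminarity/nesting of edge-anchored subgraphs (which, for DSPs, follows from \Cref{th:dsp_lsp}) together with \Cref{th:series_parallel_edge_feasible_cap_spanner} to reduce ``cover all vertex pairs'' to ``cover all edges''; getting the exchange argument exactly right at nodes where the single edge $st$ sits in parallel with a recursively-composed subgraph — and handling the cleanness assumption so that this configuration is canonical — is the delicate part. The incremental capacity maintenance for the linear time bound is a secondary technical point but conceptually routine once the decomposition tree is in hand.
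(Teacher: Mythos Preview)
Your approach is essentially the paper's: bottom-up induction over a clean decomposition tree, greedy removal at P-nodes, and $\bigO(1)$ per-node capacity maintenance for the linear bound. The running-time argument is fine.

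There is, however, a genuine gap in the optimality argument. Your inductive invariant---that $E'\cap E(H_\sigma)$ is a minimum-size edge-covering subset of $E(H_\sigma)$---is too weak to push the induction through a P-node where the single edge $st$ is in parallel with $H_1$. Suppose the algorithm's optimal solution for $H_1$ fails to cover $(s,t)$ in $H$, so the algorithm keeps $st$ and outputs a set of size $\mathrm{opt}(H_1)+1$. To conclude this is optimal for $H$ you must rule out the possibility that some \emph{other} optimal solution for $H_1$ has strictly larger $s$-$t$-capacity and already covers $(s,t)$, which would give $\mathrm{opt}(H)=\mathrm{opt}(H_1)$. Your invariant says nothing about which optimal $H_1$-solution the algorithm produced, so this case is not excluded.

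The paper closes this gap by carrying a second invariant: \emph{all} optimal solutions of $H_\sigma$ have the same $s$-$t$-capacity. With this, the algorithm's particular optimal sub-solution is as good as any other for the coverage test, and the dichotomy ``$(s,t)$ is covered / is not covered by an optimal $H_1$-solution'' is well-defined. Establishing the second invariant at the critical P-node also requires a short exchange argument (any optimal $H$-solution restricts to an optimal $H_1$-solution, and if one extra edge is needed then $st$ is never worse than any $e_1\in E(H_1)$), which is exactly the ``delicate part'' you flagged but did not resolve. Once you add the equal-capacity invariant and this exchange step, your outline becomes the paper's proof.
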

\begin{proof}
    We use induction over the clean series-parallel decomposition tree~$T$ of~$G$, maintaining the following invariants: at the end of each
    for-loop iteration with $(H,(s,t))$ as the graph and terminal pair for the
    respective tree node\rnew{~$\sigma$}, $E' \intersect E(H)$ is an optimal \prob{MCPS1} solution for~$H$,
    and all optimal \prob{MCPS1} solutions for $H$ have equal $s$-$t$-capacity.

    Consider a leaf of $T$. The graph $H$ with a single edge $st$ only
    allows one feasible (and hence optimal) solution consisting of its only
    edge.
    The edge is added to $E'$ during the algorithm's initialization and is
    not removed from it before $\sigma$ has been processed.

    Now consider S-compositions and those P-compositions where no edge $st$
    exists in any of the components.
    They produce no additional paths between the endpoints of any edge (which
    are the only vertex pairs that have to be covered, see
    \Cref{th:series_parallel_edge_feasible_cap_spanner}).
    Thus, the feasible (optimal) solutions of $H$ are exactly those that can be
    created by taking the union of one feasible (optimal, respectively) solution
    for each respective component.
    The algorithm proceeds accordingly by keeping $E'$ unchanged.
    Since the components' respective optimal solutions all have the same
    source-sink-capacity (per induction hypothesis), this also holds true for
    their unions, i.e., the optimal solutions of $H$.

    Now consider a P-composition with $st \in E(H)$. As $T$ is clean,
    there are two components~$H_1$ and~$H_2$
    with $E(H_2) = \{st\}$, and $\pathind{G}{s,t} = H$.
    All edges $e \in H_1$ are covered optimally by $E'
    \intersect E(H_1)$ both in $H_1$ and in $H$ since $st \notin
    E(\pathind{H}{e})$.

    \begin{description}
    \item[Case 1:] If one optimal solution for $H_1$ already covers $st$
    in~$H$, then all optimal solutions for $H_1$ do so (as they all have the
    same $s$-$t$-capacity per induction hypothesis).
    Then, the optimal solutions for $H_1$ are exactly the optimal solutions
    for~$H$, and the algorithm finds one of them by keeping its solution
    for~$H_1$ intact and removing~$st$ from~$E'$.
    Note that this removal does not affect the feasibility of $E'$ for subgraphs
    of $G \setminus \pathind{G}{s,t}$ that have already been processed.

    \item[Case 2:] If $st$ is not yet covered by our optimal solution for $H_1$, \rnew{then} it is not
    covered by any optimal solution for $H_1$.
    Our algorithm chooses the edge~$st$ by keeping optimal solutions for
    both~$H_1$ and~$H_2$.
    An optimal solution~$S$ for~$H$ must contain an optimal solution for~$H_1$:
    \rnew{Note that} $S' \be S \setminus \{st\}$ covers all edges of $H_1$.
    If~$S'$ were not optimal, there would exist another solution
    $S''$ that covers all edges and thus vertex pairs of $H_1$
    with $|S''| < |S'|$.
    But $S''' \be S'' \union \{st\}$ is feasible for~$H$ because the capacity
    requirements for vertex pairs in $H$ and $H_1$ only differ by at most one.
    We arrive at $|S'''| = |S''|+1 < |S'|+1 \leq |S|$, a contradiction.

    \rdel{--- }In addition to an optimal solution for $H_1$, we need exactly one more
    edge to increase the $s$-$t$-capacity and cover~$st$ in~$H$:
    this additional edge is either~$st$ itself or another edge from~$H_1$.
    Assume that adding an additional edge $e_1 \in E(H_1)$ (instead of $st$)
    increases the capacity for $st$ or a later source-sink-pair by~1, then~$st$
    by construction does so as well.
    Thus, adding $st$ instead of $e_1$ is never worse; furthermore, all
    optimal solutions for $H$ have the same $s$-$t$-capacity.
    \end{description}

    For the running time, note that a (clean) series-parallel decomposition
    tree~$T$ can be computed and traversed in linear
    time~\cite{DBLP:journals/siamcomp/ValdesTL82}.
    If~$T$ were not clean, it is trivial to establish this property in linear
    time:
    Traverse~$T$ bottom-up; whenever a leaf~$\lambda$ is the child of a P-node, ascend the
    tree as long as the parents are P-nodes. Let~$\varrho$ be the final such
    P-node, and~$\gamma$ the other child of~$\varrho$ that was not part of the
    ascent.
    Swap~$\lambda$ with~$\gamma$.
    Observe that the ascents for different leafs are disjoint, and thus this
    operation requires overall only $\bigO(|T|) = \bigO(|V|)$ time.

    In each step during the traversal in line 3, we can compute the capacity of the
    current source and sink---for both the current solution and $G$ overall---in
    constant time using the values computed in previous steps:
    a single edge is assigned a capacity of 1, and an S-composition
    (P-composition) is assigned the minimum (sum, respectively) of the capacities of
    its two components.
\end{proof}

\Cref{alg:opt_cps_sp_graphs} relies on the fact that all optimal \prob{MCPS1} solutions of a \ac{DSP} have the same sink-source-capacity (we will reuse this property later in \Cref{alg:opt_cps_lsp}, see \Cref{sec:algorithm_lsp}).
However, for non-uniform edge capacities, this property may not hold anymore as soon as a P-composition occurs where one component consists of a single edge---\Cref{fig:cps_algs_counterexample} visualizes this.
Thus, we need to use a different approach to solve the general \prob{MCPS} problem on \acp{DSP}.

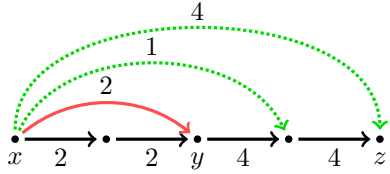
\begin{figure}[t]
\centering
\begin{tikzpicture}[scale=1.20]
        \begin{scope}[every node/.style={dot}]
            \node (a) at (0,0) {};
            \node (b) at (1,0) {};
            \node (c) at (2,0) {};
            \node (d) at (3,0) {};
            \node (e) at (4,0) {};
        \end{scope}
        \node[label, below=0mm of a] (la) {$x$};
        \node[label, below=0mm of c] (lc) {$y$};
        \node[label, below=0mm of e] (le) {$z$};

        \begin{scope}[every edge/.style={edge}]
            \path (a) edge[] node[label,midway,below] {$2$} (b);
            \path (b) edge[] node[label,midway,below] {$2$} (c);
            \path (c) edge[] node[label,midway,below] {$4$} (d);
            \path (d) edge[] node[label,midway,below] {$4$} (e);
            \path (a) edge[draw=myred, bend left=40] node[label,midway,above] {$2$} (c);
            \path (a) edge[draw=mygreen, densely dotted, bend left=65] node[label,midway,above] {$1$} (d);
            \path (a) edge[draw=mygreen, densely dotted, bend left=90] node[label,midway,above] {$4$} (e);
        \end{scope}
\end{tikzpicture}
\caption{A \ac{DSP} with \emph{non-uniform} edge capacities on which the \prob{MCPS1}-focused
    \Cref{alg:opt_cps_sp_graphs} would fail to find an
    optimal \prob{MCPS} solution for $\stretcha = \nicefrac{1}{2}$.
    The \ac{MED} (drawn as straight lines in black) is contained in every
    feasible \prob{MCPS} solution.
    The optimal solution additionally selects the red edge (solid bent), but
    \Cref{alg:opt_cps_sp_graphs} would take the green edges (dotted bent)
    instead.
    This is because it considers the red edge~$xy$ before the greens, and $xy$ is not required to cover itself. %
    In the subsequent composition step, \Cref{alg:opt_cps_sp_graphs} considers the graph without~$z$. This graph has two optimal solutions with different sink-source-capacities: the one with capacity 4 that takes the red edge, and the one with capacity 3 that takes the green edge instead.
    Since the algorithm is now forced to take the solution with lower capacity, it cannot find the optimal solution for the whole graph in the last composition step, but has to select the second green edge as well, to achieve a feasible solution.}
    \label{fig:cps_algs_counterexample}
\end{figure}

\newcommand{\cc}{\ensuremath{\mathfrak{c}}}
\newcommand{\ee}{\ensuremath{\mathfrak{m}}}
\newcommand{\cpair}[1]{\ensuremath{\langle#1\rangle}}
\newcommand{\clist}[1]{\ensuremath{L_{#1}}}
\newcommand{\clistt}[1]{\ensuremath{L'_{#1}}}
\Cref{alg:opt_ccps_sp_graphs} describes a dynamic programming algorithm to find the optimal solution value for general \prob{MCPS} on a \ac{DSP}~$G$ (standard backtracking techniques can then be used to obtain the actual minimum subgraph):
We perform a bottom-up
traversal of the series-parallel decomposition tree of~$G$.
For each processed sub-\ac{DSP}~$G_\sigma$ with source~$s$ and sink~$t$, we
manage a list $\clist{\sigma}$ that stores \emph{description pairs} of the form~\cpair{\ee,\cc}. Each such description pair represents a feasible \prob{MCPS} solution for~$G_\sigma$ with
$\ee$~edges that achieves an~$s$-$t$-capacity of~$\cc$.
Naturally, once the whole graph~$G = G_\varrho$ has been processed, we return the
lowest number of edges of any description pair in~$\clist{\varrho}$ as the minimum objective value.

The algorithm makes use of two subroutines: \textsc{combine} and \textsc{reduce}.
\rnew{First, }\textsc{combine} merges the lists~$\clist{\lambda_1}$
and~$\clist{\lambda_2}$ of description pairs for two decomposition tree nodes~$\lambda_1$
and~$\lambda_2$ to create a preliminary list for their parent node~$\sigma$.
Afterwards, \textsc{reduce} filters out unwanted solutions:
Since we only store description pairs in~$\clist{\sigma}$ that correspond to \emph{feasible}
solutions, we must ensure that the $s$-$t$-capacity of each of these solutions is
higher than $\stretcha\cdot\maxflow{G_\sigma}(s,t)$.
Moreover, to achieve a polynomial (and not just pseudo-polynomial) running time,
we reduce the size of the list by removing every pair that corresponds to
a \emph{capacity-dominated} feasible solution---i.e., a feasible solution such that
a higher $s$-$t$-capacity is achievable with the same number of edges.
This allows us to restrict the size of each list~$\clist{\sigma}$ to $|E(G_\sigma)|$
and thus guarantee a truly polynomial running-time.
One could further speed up the algorithm in practice by also removing
feasible solutions whose $s$-$t$-capacity is achievable with fewer edges,
however, this would not change the theoretical upper bound on the running time.

\begin{algorithm}[t]
    \caption{Solve \prob{MCPS} on \acp{DSP}.}
    \label{alg:opt_ccps_sp_graphs}
    \begin{algorithmic}[1]
        \Input{\ac{DSP} $G = (V,E)$, edge capacities~$\ecap$, retention ratio~$\stretcha$.}
        \Let{$T$}{clean series-parallel decomposition tree for $G$}
        \ForAll{tree node $\sigma$ in a bottom-up traversal of $T$}
            \Let{$(G_\sigma, (s,t))$}{graph and terminal pair corresponding to $\sigma$}
            \If{$|E(G_\sigma)| = 1$}
            \Let{$\clist{\sigma}$}{$\{\cpair{1, \ecap(st)}\}$}
            \Else
                \Let{$(\lambda_1,\lambda_2)$}{children of $\sigma$}
                \If{$\sigma$ is an S-composition}
                    \Let{$\clistt{\sigma}$}{\Call{combine}{$\clist{\lambda_1}$, $\clist{\lambda_2}$, \text{min}}}
                \ElsIf{$\sigma$ is a P-composition with $st\notin E(G_\sigma)$}
                    \Let{$\clistt{\sigma}$}{\Call{combine}{$\clist{\lambda_1}$, $\clist{\lambda_2}$, \text{sum}}}
                \Else \Comment{$\sigma$ is a P-composition with $st \in E(G_\sigma)$, wlog.\ $\{st\}= E(G_{\lambda_2})$}
                    \State{temporarily add \cpair{0,0} to $\clist{\lambda_2}$}
                    \Let{$\clistt{\sigma}$}{\Call{combine}{$\clist{\lambda_1}$, $\clist{\lambda_2}$, \text{sum}}}
                \EndIf
                \Let{$\clist{\sigma}$}{\Call{reduce}{$\clistt{\sigma}$, $\ceil{\stretcha \cdot \maxflow{G_\sigma}(s,t)}$}}
            \EndIf
        \EndFor
        \Let{$\varrho$}{root of $T$}
        \Return{minimum edge cardinality stored in any description pair in $\clist{\varrho}$}\vspace*{3mm}
        \Function{combine}{$\clist{\lambda_1}$, $\clist{\lambda_2}$, \text{combine\_func}}
        \Comment merge two lists
            \Let{$L$}{empty list}
            \ForAll{$\left(\cpair{\ee_1,\cc_1},\cpair{\ee_2,\cc_2}\right)$ in $\clist{\lambda_1} \times \clist{\lambda_2}$}
                \State{add \cpair{\ee_1+\ee_2, \text{combine\_func}(\cc_1, \cc_2)} to $L$}
            \EndFor
            \Return $L$
        \EndFunction\vspace*{3mm}
        \Function{reduce}{$L'$, $\check{c}$}
        \Comment remove infeasible and dominated solutions
            \Let{$L$}{empty list}
            \State{partition pairs in $L'$ into buckets according to their edge cardinality}
            \ForAll{bucket $b$}%
                \Let{$\cpair{\ee,\cc}$}{pair with largest capacity in $b$} %
                \If{$\cc \geq \check{c}$}
                    \State add \cpair{\ee,\cc} to $L$
                \EndIf
            \EndFor
            \Return $L$
        \EndFunction
    \end{algorithmic}
\end{algorithm}

\begin{theorem}\label{th:opt_ccps_sp_graphs}
    \Cref{alg:opt_ccps_sp_graphs} solves \prob{MCPS} on \acp{LSP} optimally in~$\bigO(|V|^3)$ time.
\end{theorem}
\begin{proof}
    We use induction over the clean series-parallel decomposition
    tree~$T$ of~$G$, maintaining the following invariant: at the end of each
    for-loop iteration with $(G_\sigma,(s,t))$ as the graph and terminal pair
    for the respective tree node~$\sigma$, $\clist{\sigma}$ contains exactly the set of pairs corresponding to non-capacity-dominated feasible solutions
    of~$G_\sigma$.
    Observe that multiple solutions may map to the same description pair, but we will see that we do not need to distinguish between these solutions; any solution satisfying a given description pair will suffice.

    To start the induction, let $\sigma$ be a leaf of $T$:
    A graph~$G_\sigma$ with a single edge~$st$ only allows one feasible (and
    hence non-capacity-dominated) solution consisting of its only edge, allowing for
    an~$s$-$t$-capacity of~$\ecap(st)$---the corresponding description pair is added to
    $\clist{\sigma}$.
    Now consider a decomposition tree node~$\sigma$ with terminal pair~$(s,t)$
    and children~$\lambda_1,\lambda_2$, and recall that an
    \prob{MCPS} solution for~$G_\sigma$ only needs to cover all edges in order to
    be feasible (see \Cref{th:series_parallel_edge_feasible_cap_spanner}).
    In contrast, if we were to not cover an edge~$e\in E(G_\sigma)$, even adding all of the edges $E\setminus E(G_\sigma)$ would
    not lead to a solution covering~$e$:
    Composition steps outside of the subtree rooted at~$\sigma$ produce no additional paths between the endpoints of any edge within~$G_\sigma$. Since we assume a \emph{clean} decomposition tree, this is in particular also true for the potential edge $st$, as (in a bottom-up traversal) this edge is the last added edge of $G_\sigma$.

    Assume $st\not\in E(G_\sigma)$.
    The composition step joining $G_{\lambda_1}$ and $G_{\lambda_2}$ will produce no additional paths (and hence not change
    any capacity requirements) between the endpoints of any edge.
    Thus, any feasible (non-capacity-dominated) solution of $\sigma$ can always be constructed as the union of feasible (non-capacity-dominated) solutions for $G_{\lambda_1}$ and~$G_{\lambda_2}$.
    In case of an S-composition (P-composition), the solution's capacity for $\sigma$ is the minimum (sum, respectively) of the capacities of the children's solutions.
    The function \textsc{combine} yields all description pairs corresponding to solutions that can be obtained from
    feasible non-capacity-dominated subsolutions; for the latter, we in turn also only require their corresponding description pairs.
    \textsc{reduce} will shrink that list by removing capacity-dominated solutions. Observe that all solutions will cover the vertex pair $(s,t)$ by construction, and thus the test $\cc\geq \check{c}$ will never fail.

    Finally, assume $st\in E(G_\sigma)$ (and thus $\sigma$ is a P-composition); since we have a clean decomposition tree, we can assume that $E(G_{\lambda_2})=\{st\}$. As before, the edges of $G_{\lambda_1}$ do not gain additional paths in $E(G_\sigma)$, but now $st$ does.
    Any feasible (non-capacity-dominated) solution of $\sigma$ can thus always be constructed as the union of a feasible (non-capacity-dominated) solution for $G_{\lambda_1}$ and a not necessarily feasible solution for $G_{\lambda_2}$: i.e., for $G_{\lambda_2}$ we may consider the solution of selecting $st$, or of not selecting $st$. Our function \textsc{combine} yields all description pairs corresponding to any such solutions, only requiring the description pairs for $\lambda_1$ and the two pairs $\cpair{1,\ecap(st)},\cpair{0,0}$ for $\lambda_2$. Now, \textsc{reduce} will not only remove all capacity-dominated solutions, but also infeasible ones (arising if $st$ was not chosen but necessary to cover $(s,t)$).

    The overall optimal solutions of~$G$ with the highest sink-source-capacity
    are always non-capacity-dominated, and all of them correspond to the same description pair.
    Our invariant guarantees that this pair is stored in~$\clist{\varrho}$
    where $\varrho$ is the root of~$T$.

    Concerning the running time,
    a (clean) series-parallel decomposition tree~$T$ can be computed and
    traversed in linear time~\cite{DBLP:journals/siamcomp/ValdesTL82}.
    After calling \textsc{reduce}, the number of non-capacity-dominated description pairs stored
    in each tree node~$\sigma$ is in~$\bigO(|E(G_\sigma)|) \subseteq \bigO(|E|)$ since
    there is at most one description pair for each edge cardinality in
    $\{1,\dots,|E(G_\sigma)|\}$.
    When combining two of these lists, we obtain~$r\in\bigO(|E|^2)$
    new description pairs, which are then partitioned in $\bigO(r)$ time. The same running time holds for selecting the non-capacity-dominated description pair per bucket, over all buckets.
    Since we have to perform these operations for each of the $\bigO(|V|)$ many decomposition nodes, and $|E| \in \bigO(|V|)$ in \acp{DSP}, we obtain the claimed running time.
\end{proof}

\subsection{MCPS and \texorpdfstring{MCPS$_1$}{MCPS1} on Laminar Series-Parallel Graphs}
\label{sec:algorithm_lsp}

\Cref{alg:opt_ccps_lsp} \rdel{outlines an intuitive}\rnew{implements a natural} approach to solve \prob{MCPS} on an \ac{LSP} $G=(V,E)$. It is based on the
observation that every \ac{LSP} can be partitioned into a set of edge-disjoint
\acp{DSP}:
Consider the \emph{\acp{MEAS}}, i.e., those $\pathind{G}{e}$, for $e \in
E$, such that there is no other edge $e' \in E \setminus \{e\}$ with
$E(\pathind{G}{e})
\subseteq E(\pathind{G}{e'})$.
Since \acp{LSP} are \Cref{th:lsp_psp}-graphs, each of these \ac{MEAS} must be a
\ac{DSP}, and it suffices to cover its edges (see
\Cref{th:series_parallel_edge_feasible_cap_spanner}).
Furthermore, the \ac{EAS}~$\pathind{G}{e''}$ for each edge $e'' \in E$~is
contained in a single \ac{MEAS} (as \acp{LSP} are \Cref{th:lsp_laminar}-graphs).
Hence, one can first identify the \acp{MEAS} and then run \Cref{alg:opt_ccps_sp_graphs} on
each of them individually to obtain an optimal \prob{MCPS} solution.
Note that instead of \Cref{alg:opt_ccps_sp_graphs}, one could employ any blackbox algorithm for \prob{MCPS} on \acp{DSP}.

\newcommand{\MEAS}{\ensuremath{\mathcal{G}_{\text{MEAS}}}}
\begin{algorithm}[t]
    \caption{Solve \prob{MCPS} on \acp{LSP}.}
    \label{alg:opt_ccps_lsp}
    \begin{algorithmic}[1]
        \Input{\ac{LSP} $G = (V,E)$, edge capacities~$\ecap$, retention ratio~$\stretcha \in (0,1)$.}
            \Let{$E'$}{$\emptyset$}
            \ForAll{edge $e = (u,v) \in E(G)$}
                \State identify $\pathind{G}{e}$ via depth-first search starting at $u$ and backtracking from $v$
            \EndFor
            \Let{$\MEAS$}{$\{\pathind{G}{e} \ssep e \in
E \text{ s.t. } \nexists e' \in E \setminus \{e\} \text{ with }
E(\pathind{G}{e})
\subseteq E(\pathind{G}{e'})\}$}
            \ForAll{$\pathind{G}{e} \in \MEAS$}
                \State call \Cref{alg:opt_ccps_sp_graphs} on $(\pathind{G}{e}, \stretcha)$ and add the resulting solution edges to $E'$
            \EndFor
            \Return{$E'$}
    \end{algorithmic}
\end{algorithm}

\begin{corollary}\label{th:opt_ccps_lsp}
    \Cref{alg:opt_ccps_lsp} solves \prob{MCPS} on \acp{LSP} optimally in $\bigO(|E|\cdot|V|^3)$ time.
\end{corollary}
\begin{proof}
    \Cref{alg:opt_ccps_lsp} runs \Cref{alg:opt_ccps_sp_graphs} on the \acp{MEAS} of~$G$ and returns the union of the respective solutions.
    We can consider the different \acp{MEAS} in isolation
    as their solutions are independent of each other:
    \acp{LSP} are \Cref{th:lsp_laminar}-graphs. Thus, for any
    edge $uv\in E(G)$, all $u$-$v$-paths are completely contained in a single \ac{MEAS}.
    Moreover, each \ac{MEAS} must be a \ac{DSP} since \acp{LSP} are \Cref{th:lsp_psp}-graphs.
    As \Cref{alg:opt_ccps_sp_graphs} solves \prob{MCPS} optimally on \acp{DSP} (see \Cref{th:opt_ccps_sp_graphs}), we can conclude that \Cref{alg:opt_ccps_lsp} solves \prob{MCPS} optimally on \acp{LSP} as well.

    For the running time, recall that \acp{LSP} may have a quadratic number of edges, i.e., $|E|\in \bigO(|V|^2)$ and there exist \acp{LSP} with $\Theta(|V|^2)$ edges.
    Consider the three main steps of the algorithm:
    Identifying the \ac{EAS} for a given edge via depth-first search takes $\bigO(|E|)$ time, leading to~$\bigO(|E|^2)$ for the first for-loop. However, observe that each resulting \ac{EAS} is a \ac{DSP} and thus has only $\bigO(|V|)$ edges.
    To identify the \acp{MEAS}, we can perform a $\bigO(|V|)$ inclusion check for every pair of \acp{EAS}, totaling at~$\bigO(|E|^2|V|)$ time.
    Lastly, running \Cref{alg:opt_ccps_sp_graphs} on an \ac{MEAS}~$\pathind{G}{e}$ takes~$\bigO(|V(\pathind{G}{e})|^3)\subseteq \bigO(|V|^3)$ time. Since we have at most $\bigO(|E|)$ \acp{MEAS}, this third step requires $\bigO(|E|\cdot|V|^3)$ time. In particular, we observe that the time required to solve each \ac{MEAS} dominates the running time of identifying them.
\end{proof}

 Within \Cref{alg:opt_ccps_lsp}, we could also use any \prob{MCPS1} algorithm (like \Cref{alg:opt_cps_sp_graphs}) to obtain an optimal \prob{MCPS1} solution. However, for this special case we can give a faster and more straightforward, but functionally equivalent quadratic-time algorithm:

\begin{algorithm}[t]
    \caption{Solve \prob{MCPS1} on \acp{LSP}.}
    \label{alg:opt_cps_lsp}
    \begin{algorithmic}[1]
        \Input{\ac{LSP} $G = (V,E)$, retention ratio $\stretcha \in (0,1)$.}
            \Let{$E'$}{$\emptyset$}
            \ForAll{edge $e \in E$}
            \Let{$\dist{e}$}{number of edges in $\pathind{G}{e}$}
            \EndFor
            \State sort all edges $e \in E$ by non-descending $\dist{e}$
            \ForAll{edge $e = uv \in E$ in order}
                \If{$uv$ is not covered by $E'$}
                    \State add $e$ to $E'$
                \EndIf
            \EndFor
            \Return{$E'$}
    \end{algorithmic}
\end{algorithm}

\begin{theorem}
    \Cref{alg:opt_cps_lsp} solves \prob{MCPS1} on~\acp{LSP} optimally in
    $\bigO(|E|^2)$~time.
\end{theorem}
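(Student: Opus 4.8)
The plan is to reduce to the \ac{DSP} case handled in \Cref{th:alg_cps_dsps}. As noted before the algorithm, the \acp{MEAS} $M_1,\dots,M_k$ of $G$ partition $E$ (they are pairwise edge-disjoint by \Cref{th:lsp_laminar} and maximality, and every edge $e$ lies in the unique \ac{MEAS} containing $\pathind{G}{e}$), each $M_i$ is a \ac{DSP} by \Cref{th:lsp_psp}, and for every edge $ab\in E(M_i)$ all $a$-$b$-paths of $G$ lie in $\pathind{G}{ab}\subseteq M_i$, so $\maxflow{G}(a,b)=\maxflow{M_i}(a,b)$ and ``$ab$ is covered'' is the same question in $G$ and in $M_i$. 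By \Cref{th:series_parallel_edge_feasible_cap_spanner} and \Cref{th:lsp_laminar} an edge set is feasible for $G$ iff its restriction to each $M_i$ is feasible for $M_i$; summing over the partition, a solution is optimal for $G$ iff its restriction to every $M_i$ is optimal for $M_i$. Since $\pathind{G}{uv}$ lies in a single \ac{MEAS} for each edge $uv$, the coverage test of \Cref{alg:opt_cps_lsp} for $uv$ only inspects $E'$ within that \ac{MEAS}. It therefore suffices to prove that \Cref{alg:opt_cps_lsp}, restricted to the edges of any fixed \ac{MEAS} $M$, returns exactly the edge set that \Cref{alg:opt_cps_sp_graphs} keeps on $M$.

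The technical core is a lemma I would establish first: the order in which \Cref{alg:opt_cps_lsp} processes edges, i.e.\ non-descending $\dist{e}=|E(\pathind{G}{e})|$, refines the (laminar) containment order on $\{E(\pathind{G}{e})\}_{e\in E}$. Concretely, if $e'\neq uv$ is an edge of $\pathind{G}{uv}$, then $\pathind{G}{e'}\subsetneq\pathind{G}{uv}$, so $\dist{e'}<\dist{uv}$ and $e'$ is processed strictly before $uv$. \Cref{th:lsp_laminar} leaves only the alternative $\pathind{G}{uv}\subseteq\pathind{G}{e'}$ (equality included), which I would rule out as follows: writing $e'=xy$, \Cref{th:lsp_psp} makes $\pathind{G}{e'}$ an $x$-$y$-\ac{DSP}, so $x$ is its unique source and $y$ its unique sink; if $\pathind{G}{uv}\subseteq\pathind{G}{e'}$, then every $u$-$v$-path of $G$ lies in the acyclic graph $\pathind{G}{e'}$, and since $e'$ lies on such a path, that path can only enter at the source $x$ and is stuck at the sink $y$ right after traversing $e'$; hence $u=x$ and $v=y$, i.e.\ $e'=uv$, a contradiction. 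So when \Cref{alg:opt_cps_lsp} reaches $uv$, every edge of $E(\pathind{G}{uv})\setminus\{uv\}$ has already been decided.

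Given the lemma, I would prove the equivalence---and hence optimality---by induction on $\dist{e}$, showing \Cref{alg:opt_cps_lsp} adds $e$ iff \Cref{alg:opt_cps_sp_graphs} keeps $e$ when run on the \ac{MEAS} of $e$. For $\dist{e}=1$ we have $\pathind{G}{e}=\{e\}$: \Cref{alg:opt_cps_lsp} adds $e$ since the current $u$-$v$-capacity is $0<\ceil{\stretcha\cdot\maxflow{G}(u,v)}=1$, and \Cref{alg:opt_cps_sp_graphs} keeps $e$ since $e$ is never a single-edge branch of a P-node. For $\dist{e}>1$, in a clean decomposition tree of $M$ the edge $e=st$ is the single-edge branch of a unique P-node $\sigma$ whose graph is $\pathind{G}{st}$; there \Cref{alg:opt_cps_sp_graphs} removes $st$ iff $(s,t)$ is covered by $(E'\intersect E(\pathind{G}{st}))\setminus\{st\}$, a test reached only after every edge $e'$ with $\pathind{G}{e'}\subsetneq\pathind{G}{st}$ has been decided (their P-nodes lie below $\sigma$). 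By the lemma these are precisely the edges \Cref{alg:opt_cps_lsp} processed before $st$, the induction hypothesis makes the two algorithms agree on each of them, and $\maxflow{G}(s,t)=\maxflow{M}(s,t)$; so both coverage tests compare the same graph against the same threshold and agree. Thus \Cref{alg:opt_cps_sp_graphs} removes $st$ iff $st$ is covered iff \Cref{alg:opt_cps_lsp} does not add $st$, completing the induction; with the first paragraph this gives optimality.

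For the running time: for each $e=uv$, compute $\pathind{G}{e}$ by a forward search from $u$ and a backward search to $v$; it is a \ac{DSP} by \Cref{th:lsp_psp}, so a series-parallel decomposition yields $\dist{e}$ and $\maxflow{G}(u,v)$ (the root value obtained by taking minima at S-nodes and sums at P-nodes, all edges present) in $\bigO(|E|)$ time, hence $\bigO(|E|^2)$ in total; sorting costs $\bigO(|E|\log|E|)$. Each coverage test is one bottom-up pass over the precomputed decomposition tree of $\pathind{G}{uv}$ with a leaf valued $1$ iff its edge currently lies in $E'$, again $\bigO(|E|)$, so $\bigO(|E|^2)$ overall. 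The hardest step will be the ordering lemma---ruling out $\pathind{G}{uv}\subseteq\pathind{G}{e'}$ for $e'$ on a $u$-$v$-path---since it is exactly what lets the flat $\dist{\cdot}$-order emulate the nested decomposition tree that \Cref{alg:opt_cps_sp_graphs} walks.
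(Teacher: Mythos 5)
Your proof follows essentially the same route as the paper's: decompose the \ac{LSP} into its edge-disjoint \acp{MEAS} (each a \ac{DSP} by \Cref{th:lsp_psp}, mutually independent by \Cref{th:lsp_laminar}, and reducible to edge-coverage by \Cref{th:series_parallel_edge_feasible_cap_spanner}), then show by induction on $\dist{e}$ that \Cref{alg:opt_cps_lsp} reproduces the decision \Cref{alg:opt_cps_sp_graphs} makes at the P-node of each edge. Your ordering lemma---that $e'\in E(\pathind{G}{u,v})\setminus\{uv\}$ forces $\pathind{G}{e'}\subsetneq\pathind{G}{u,v}$ and hence $\dist{e'}<\dist{uv}$, with the alternative ruled out via the unique source and sink of the \ac{DSP} $\pathind{G}{e'}$---is a welcome explicit justification of a step the paper merely asserts.

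One caveat on the running-time paragraph: intersecting a forward search from $u$ with a backward search from $v$ yields the edges lying on $u$-$v$-\emph{walks}, which on cyclic \acp{LSP} can be a strict superset of $E(\pathind{G}{u,v})$. In the directed triangle $u\to a\to b\to u$ (a cyclic \ac{DSP}, hence an \ac{LSP}), this returns all three edges for the pair $(u,a)$, whereas $\pathind{G}{u,a}$ is the single edge $ua$; the superset is not even a $u$-$a$-\ac{DSP}, so the subsequent series-parallel decomposition would fail. This would need to be repaired (pruning to edges on simple paths) before the $\bigO(|E|)$-per-edge bound goes through; the paper's own description of this step is comparably terse, and the issue does not affect your correctness argument, which treats the coverage test abstractly.
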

\begin{proof}
    We prove by induction over $\dist{e}$ that for each
    \ac{DSP} (and hence for each \ac{MEAS}), \Cref{alg:opt_cps_lsp} returns the
    same result as \Cref{alg:opt_cps_sp_graphs}:
    Edges $e$ with $\dist{e} = 1$ (i.e., \ac{MED}-edges)
    are added to~$E'$ by \Cref{alg:opt_cps_lsp} in order to cover themselves.
    Similarly, \Cref{alg:opt_cps_sp_graphs} will add such edges during its
    initialization and never remove them:
    edge $e$ would only be removed if $e$ connected the source and sink of a
    subgraph constructed with a P-composition, a contradiction to $\dist{e} = 1$.

    Now assume that, for some $i \geq 2$, the edges with a $\distsym$-value smaller than $i$ are already processed equivalently to \Cref{alg:opt_cps_sp_graphs}.
    Consider any edge $e = uv$ with $\dist{e} = i$.
    Since $G$ is a \Cref{th:lsp_psp}-graph, $H \be \pathind{G}{e}$ is a
    $u$-$v$-\ac{DSP}.
    As $e \in E(H)$, $H$ can be constructed with a P-composition from two graphs
    $H_1$ and $H_2$ where $E(H_2) = \{e\}$.
    All edges in $H_1$ have already been processed (they have a
    $\distsym$-value smaller than $i$), and the solutions of
    \Cref{alg:opt_cps_lsp} and \Cref{alg:opt_cps_sp_graphs} thus coincide on~$H_1$.
    Hence, both algorithms produce the same solution for~$H$ as they both
    contain~$e$ if and only if~$e$ is not already covered by the current
    solution for~$H_1$.

    For each \ac{MEAS}, \Cref{alg:opt_cps_sp_graphs} and, as we have now shown,
    \Cref{alg:opt_cps_lsp} both find the smallest subgraph that covers all of
    its edges.
    As \acp{LSP} are \Cref{th:lsp_psp}-graphs, this suffices to
    guarantee an optimal \prob{MCPS1} solution for the \ac{MEAS} by
    \Cref{th:series_parallel_edge_feasible_cap_spanner}.
    Moreover, as argued in the proof for \Cref{th:opt_ccps_lsp}, to guarantee an optimal solution for~$G$ as a whole, it suffices to show optimality for the different \acp{MEAS} individually.

    It remains to argue the running time.
    For each $e = uv \in E$, we compute $\dist{e}$ via a depth-first
    search starting at $u$ and counting tree- and cross-edges when backtracking from $v$.
    Overall, this results in $\bigO(|E|^2)$ time.
    The sorting of the edges can be done in $\bigO(|E|)$ time as the domain are
    integer values between $1$ and $|E|$.
    Lastly, to check whether an edge $uv$ is covered, we precompute the
    $s$-$t$-capacity for every edge $st \in E$ on $\pathind{G}{s,t}$
    and then, when needed, compute the $u$-$v$-capacity on the graph $G[E'
    \intersect E(\pathind{G}{u,v})]$ for the current solution~$E'$.
    Note that both of these subgraphs are \acp{DSP} as $G$ is a \Cref{th:lsp_psp}-graph.
    This allows us to compute a series-parallel decomposition tree in $\bigO(|E|)$
    time and traverse it bottom-up to obtain the capacity (cf.\ the proof of
    \Cref{th:alg_cps_dsps}). Doing so twice for every edge takes
    $\bigO(|E|^2)$~time overall.
\end{proof}

\subsection{Applications of \texorpdfstring{\Cref{alg:opt_cps_lsp}}{Algorithm~\ref{alg:opt_cps_lsp}} to Other Problems}
\label{sec:algorithm_other}

Consider the \probl{MSCS} \rdel{(\prob{MSCS})}
problem~\cite{DBLP:journals/siamcomp/KhullerRF95,DBLP:conf/soda/Vetta01}, the
special case of \prob{MED} on strongly connected graphs \rnew{(}\rdel{, i.e., }graphs where
every vertex is reachable from every other vertex\rnew{)}.
Since there are straightforward reductions from \probl{DHC} to \rdel{\prob{MSCS}}\rnew{\probl{MSCS}} to
\prob{MED} to \prob{MCPS1} that all use a common input graph~$G$,
\Cref{alg:opt_cps_lsp} can be adapted to solve these problems as well:
To solve \prob{MED}, one simply has to set~$\stretcha = \min_{(s,t) \in V^2}
\nicefrac{1}{\maxflow{G}(s,t)}$ and then run the algorithm on~$(G,\alpha)$.
Then, \Cref{alg:opt_cps_lsp} does precisely
the same as the algorithm for finding the \ac{MED} \rdel{on}\rnew{in} \acp{DAG}
\cite{DBLP:journals/siamcomp/AhoGU72}: it returns all those edges for which
there is only one path between their endpoints (namely the edge itself).
Hence, our new insight with regards to \rdel{the }\prob{MED} is that the aforementioned
approach does not only solve \prob{MED} optimally on \acp{DAG}, but even on
arbitrary \acp{LSP} as well.
Moreover, if the input graph is strongly connected (Hamiltonian), the returned
\ac{MED} forms \rdel{an \prob{MSCS}}\rnew{a minimum strongly connected subgraph} (directed Hamiltonian cycle, respectively).

\begin{corollary}
    \probl{DHC}, \probl{MSCS} and \probl{MED} can be computed on any \ac{LSP} in quadratic time.
\end{corollary}

\section{Conclusion and Open Questions}
We started research into capacity-preserving subgraphs by
not only showing the NP-hardness of \prob{MCPS} (even with unit edge
capacities on \acp{DAG}), but also presenting a first inapproximability result and proving the para-NP-hardness of \prob{MCPS} w.r.t.\ several (di)graph parameters.
Furthermore, we proposed the new graph class of \acfp{LSP}, a natural extension of
\acfp{DSP} which includes, e.g., cyclic and dense graphs.
We then gave several algorithmic results: two algorithmically surprisingly simple methods to solve \prob{MCPS} with unit edge capacities on \acp{DSP} and \acp{LSP}; a
dynamic programming algorithm for \rdel{the }general \prob{MCPS} on \acp{DSP}; and a method to lift the aforementioned \ac{DSP}-algorithms to \acp{LSP}.

Several questions remain, for example:
Is \prob{MCPS} on undirected graphs (which is a generalization of \probl{MST}) NP-hard?
Is it NP-hard to approximate \prob{MCPS} within a sublogarithmic factor?
Is there a linear-time algorithm for \prob{MCPS} on \acp{LSP}?
All of these questions are still worth exploring even when only
considering unit edge capacities.
Lastly, one may also investigate the hardness of other graph problems on \acp{LSP}.
For example, \rnew{we believe that} our dynamic programming algorithm could \rdel{possibly} be adapted to compute
directed spanners in \acp{LSP}.

\bibliography{main_with_doi}

\end{document}